\definecolor{antiquebrass}{rgb}{0.8, 0.58, 0.46}
\algrenewcommand\algorithmicrequire{\textbf{Input:}}
\algrenewcommand\algorithmicensure{\textbf{Output:}}
\algnewcommand{\Initialize}[1]{%
  \State \textbf{Initialize:}
  \Statex \hspace*{\algorithmicindent}\parbox[t]{.8\linewidth}{\raggedright #1}
}
\renewcommand{\leq}{\leqslant}
\renewcommand{\geq}{\geqslant}
\renewcommand{\ge}{\geqslant}
\renewcommand{\le}{\leqslant}
\DeclareMathOperator*{\argmax}{arg\,max}
\newcommand{\opt}{\mathsf{opt}}
\newcommand{\OPT}{\mathsf{OPT}}
\newcommand{\alg}{\mathsf{ALG}}
\newcommand{\range}{\mathsf{range}}
\newcommand{\greedy}{\mathsf{Greedy}}
\newcommand{\In}{\mathsf{In}}
\newcommand{\Out}{\mathsf{Out}}
\newcommand{\pr}{\mathsf{Pr}}
\newtheorem{lemma}{Lemma}
\newtheorem{definition}{Definition}
 \setlist{nolistsep,leftmargin=*}
\newcommand{\eps}{\ensuremath{\varepsilon}\xspace}
\renewcommand{\epsilon}{\eps}
\let\mydelta\delta
\renewcommand{\delta}{\ensuremath{\mydelta}\xspace}
\let\myalpha\alpha
\renewcommand{\alpha}{\ensuremath{\myalpha}\xspace}
\let\mystar\star
\renewcommand{\star}{\ensuremath{\mystar}}
\newcommand{\DD}{\ensuremath{\mathcal D}\xspace}
\newcommand{\FF}{\ensuremath{\mathcal F}\xspace}
\newcommand{\HH}{\ensuremath{\mathcal H}\xspace}
\newcommand{\II}{\ensuremath{\mathcal I}\xspace}
\renewcommand{\SS}{\ensuremath{\mathcal S}\xspace}
\newtheorem{theorem}{\bf Theorem}
\newtheorem{proposition}{\bf Proposition}
\newtheorem{observation}{\bf Observation}
\newtheorem{claim}{\bf Claim}
\crefname{theorem}{Theorem}{Theorems}
\crefname{observation}{Observation}{Observations}
\crefname{lemma}{Lemma}{Lemmas}
\crefname{corollary}{Corollary}{Corollaries}
\crefname{proposition}{Proposition}{Propositions}
\crefname{example}{Example}{Examples}
\crefname{claim}{Claim}{Claims}
\crefname{table}{Table}{Tables}
\crefname{equation}{Inequality}{Inequalities}
\crefname{reductionrule}{Reduction rule}{Reduction rules}
\crefname{section}{Section}{Sections}
\crefname{appendix}{Appendix}{Appendixes}
\title{Random-Order Online Independent Set of Intervals and Hyperrectangles}
\author{Mohit Garg\thanks{Supported by a fellowship from the Walmart Center for Tech Excellence at IISc (CSR Grant WMGT-23-0001).}}
\author{Debajyoti Kar\thanks{Supported by the Google PhD Fellowship.}}
\author{Arindam Khan\thanks{
Supported in part by Google India Research Award, SERB Core Research Grant (CRG/2022/001176) on “Optimization under Intractability and Uncertainty”, and the Walmart Center for Tech Excellence at IISc (CSR Grant WMGT-23-0001).}}
\affil{Indian Institute of Science, Bengaluru}
\affil{\texttt {\{mohitgarg, debajyotikar, arindamkhan\}@iisc.ac.in}}
\date{}
\begin{document}

\maketitle

\begin{abstract}
    In the {\sc Maximum Independent Set of Hyperrectangles} problem, we are given a set of $n$ (possibly overlapping) $d$-dimensional axis-aligned hyperrectangles, and the goal is to find a subset of non-overlapping hyperrectangles of maximum cardinality. For $d=1$, this corresponds to the classical {\sc Interval Scheduling} problem, where a simple greedy algorithm returns an optimal solution. In the offline setting, for $d$-dimensional hyperrectangles, 
polynomial time  $(\log n)^{O(d)}$-approximation algorithms are known \cite{chalermsook2009maximum}. However, the problem becomes notably challenging in the online setting, where the input objects (hyperrectangles) appear one by one in an adversarial order, and on the arrival of an object, the algorithm needs to make an immediate and irrevocable decision whether or not to select the object while maintaining the feasibility. Even for interval scheduling, an $\Omega(n)$ lower bound is known on the competitive ratio. 

To circumvent these negative results, in this work, we study the online maximum independent set of axis-aligned hyperrectangles in the random-order arrival model, where the adversary specifies the set of input objects which then arrive in a uniformly random order. Starting from the prototypical secretary problem, the random-order model has received significant attention to study algorithms beyond the worst-case competitive analysis (see the survey by Gupta and Singla \cite{Gupta020}). Surprisingly, we show that the problem in the random-order model almost matches the best-known offline approximation guarantees, up to polylogarithmic factors. In particular, we give a simple $(\log n)^{O(d)}$-competitive algorithm for $d$-dimensional hyperrectangles in this model, which runs in $\tilde{O_d}(n)$ time. Our approach also yields $(\log n)^{O(d)}$-competitive algorithms in the random-order model for more general objects such as $d$-dimensional fat objects and ellipsoids. Furthermore, all our competitiveness guarantees hold with high probability, and not just in expectation.
\end{abstract}

\section{Introduction}
Geometric intersection graphs are extensively studied in discrete geometry, graph theory, and theoretical computer science. In the intersection graph of a collection of geometric objects, the objects form the vertices, and two vertices are connected by an edge if and only if the corresponding objects intersect.
The maximum independent set problem for geometric intersection graphs is a fundamental problem in computational geometry. In an equivalent geometric formulation of the problem,
given a set $\HH$ of $n$ possibly overlapping geometric objects in $d$-dimensional Euclidean space, our goal is to select a subset ${\HH}^* \subseteq \HH$ of maximum cardinality such that the objects in ${\HH}^*$ are pairwise disjoint.
Apart from its practical applications in data mining \cite{fukuda1996data}, map labeling \cite{doerschler1992rule}, routing \cite{lewin2002routing}, etc., studying the problem has led to the development of several techniques in geometric approximation algorithms \cite{har2008geometric}.

For interval graphs (i.e., when the objects are intervals), the problem is also known as interval scheduling, where a simple greedy algorithm is optimal. In contrast, the problem is W[1]-hard (with respect to the cardinality of the optimal solution) even for intersection graphs of unit disks and axis-aligned unit squares in the plane \cite{Marx05}, making the existence of an efficient polynomial-time approximation scheme (EPTAS) unlikely. However, polynomial-time approximation schemes (PTAS) have been established for fat objects, including squares and disks 
\cite{erlebach2005polynomial, Chan03}.
In the case of $d$-dimensional axis-aligned hyperrectangles ($d\ge3$), the current best approximation ratio stands at $O((\log n)^{d-2} \cdot\log \log n)$ \cite{chalermsook2009maximum}.
For axis-aligned rectangles, Mitchell \cite{mitchell2022approximating} provided the first $O(1)$-approximation algorithm, and subsequently, a $(2+\eps)$-approximation 
was achieved \cite{galvez20211+, galvez20212+}.  
However, addressing the problem for arbitrary line segments, not necessarily axis-parallel, remains significantly challenging, with only a polynomial-time $n^\eps$-approximation currently available \cite{fox2011computing, Cslovjecsek24}. 

These problems have also recently received attention in the online setting, where 
the input is revealed to the algorithm by an adversary in parts; upon receiving a part, the algorithm must make an irrevocable decision for that part while adhering to the constraints of the problem,  before any other parts are revealed. The adversary stops revealing the input at a certain step, at which point the performance of the algorithm can be measured in terms of its {\em competitive ratio}---the ratio between the value obtained by the algorithm and the optimum offline value possible for that input.
Any online algorithm, even if randomized, for interval scheduling faces an $\Omega(n)$ lower bound on its competitive ratio in this adversarial model \cite{halldorsson2002online}. 

Random-order models (ROM) were introduced to mitigate excessively pessimistic results in the adversarial-order arrival model, presenting a more realistic representation in many scenarios.
In this model, the input set of items is chosen by the adversary; however, the arrival order of the items is determined by a permutation selected uniformly at random from
the set of all permutations. 
This random reshuffling of the input items often weakens the adversary, resulting in improved performance guarantees. 
The competitive ratio of an online algorithm in the random-order arrival model, also known as the {\em random-order ratio}, is the ratio between the expected value obtained by the algorithm and the optimum offline value. Moreover, it is often preferable to have a high probability guarantee on the value obtained by the algorithm rather than only a guarantee on its expectation.
ROM encompasses various other commonly used stochastic models, such as the IID model (where the input sequence consists of independent and identically distributed random samples drawn from some fixed but unknown distribution) and is closely related to the prophet model. 
An algorithm in ROM also implies an algorithm under the IID model with, at most, the same competitive ratio. 
For an in-depth discussion, see the survey on these models by Gupta and Singla \cite{Gupta020}. 
Many problems related to scheduling \cite{AlbersJ21, AlbersJ21a, AlbersGJ23, AlbersKL21, AlbersKL21a, gobel2014online, Hebbar24} have received recent attention in this model.

Formally, an online algorithm for the maximum independent set problem in the random-order model receives an input set $\mathcal T$ consisting of $d$-dimensional geometric objects in multiple steps.
Initially, the algorithm is provided with the total number of objects, $n= |\mathcal T|$.\footnote{We will see later that the requirement that the online algorithm knows the size of the instance $n$ in the beginning, is important to obtain non-trivial results for our model (See \cref{prop:sizeObliviousLowerBound} in \cref{sec:proofsOfProps}).} 
Subsequently, the objects from $\mathcal T$ are presented to the algorithm in a uniformly random order, one at a time.
Upon receiving an object from $\mathcal T$,
the algorithm must make an irrevocable decision on whether or not to include the given object in the solution before receiving any further objects. The algorithm's objective is to maximize the size of the solution subject to the constraint that the selected objects are pairwise disjoint. 
We say an algorithm, $\alg$, is $c$-competitive ($c>1$) in the random-order model if, for all inputs $\mathcal T$, 
$\mathbb E[|\alg(\mathcal T)|] \geq \frac 1 c\cdot \opt(\mathcal T) - k$. 
Here, $\opt(\mathcal T
)$ represents the size of a maximum cardinality subset of $\mathcal T$ consisting of disjoint objects, and $k$ is some constant independent of $|\mathcal T|$. The expectation is taken over the randomness in the arrival order and any coin tosses made by the algorithm. 
We say an algorithm, $\alg$, is {\em strictly} $c$-competitive ($c>1$) in the random-order model (also called the absolute random-order ratio in \cite{AlbersKL21a}) if, for all inputs $\mathcal T$, 
$\mathbb E[|\alg(\mathcal T)|] \geq \frac 1 c\cdot \opt(\mathcal T)$.
In contrast to offline randomized algorithms, where repeated runs amplify the success probability, in the online setting, expectation guarantees do not readily translate into high-probability guarantees.
Such high probability guarantees are often needed and thus studied for online algorithms (e.g., \cite{leonardi2001line, KommKKM22, bhattacharya2021online, Mihail21}). 
In this work, the study of such a stronger notion in the context of the random-order arrival model is initiated. 
We say that $\alg$ is {\it strongly} $c$-competitive if for all inputs $\mathcal T$,
$\pr[|\alg(\mathcal T)| \geq \frac 1 c\cdot\opt(\mathcal T)] = 1 - o(1)$,
where the $o(1)$ is with respect to $|\mathcal T|$. Under this stronger notion of competitiveness, it is evident that numerous classical problems and algorithms remain to be explored.

However, even in ROM, obtaining an $n^{o(1)}$-competitive algorithm for interval scheduling is nontrivial. In fact, the trivial greedy algorithm has a competitive ratio $n^{\Omega(1)}$ (see~\Cref{thm:greedy lower}). 
G{\"o}bel et al.~\cite{gobel2014online} studied the problem on graphs with bounded inductive independence number\footnote{The {\em inductive independence number} $\rho$ of a graph is the smallest number for which there exists an order $\prec$ such that for any
independent set $U \subseteq V$ and any $v \in V$, we have $|\{u \in U |\; u \succ v \text{ and } \{u, v\} \in E\}| \leq \rho$.} $\rho$, and designed an online algorithm with a competitive ratio $O(\rho^2)$, {\em in expectation}. 
This already implies $O(1)$-competitiveness for interval scheduling.
However, they did not provide any high-probability guarantee. In fact, no online algorithm for interval scheduling that has an initial {\it observation phase} where it does not pick any of the initial $c n$ intervals (for some absolute constant $c\in (0,1)$) can be strongly $O(1)$-competitive (\cref{prop:observationLowerBound} in \cref{sec:proofsOfProps}). In particular, their algorithm
is not strongly $O(1)$-competitive for interval scheduling.
Additionally, their result does not yield a competitive algorithm for rectangles, where the inductive independence number can be $\Omega(n)$ (\cref{thm:inducindep} in \cref{sec:proofsOfProps}).

Recently, Henzinger et al.~\cite{Henzinger0W20} studied fully dynamic approximation algorithms for {\sc Maximum Independent Set of Hyperrectangles}, where 
each update involves the insertion or deletion of a hyperrectangle. They presented a dynamic algorithm with polylogarithmic update time and a competitive ratio of $(1 + \eps) \log^{d-1} K$, assuming that the coordinates of all input hyperrectangles are in $[0, K]^d$, and each of their edges has length at least 1. It is worth noting that, unlike dynamic algorithms, in online algorithms, decisions are irrevocable, and once hyperrrectangles are selected, they cannot be removed.

\subsection{Our Contributions}
In this work, we present a unified approach to design and analyze online algorithms that achieve {\em strong competitiveness} (high probability performance guarantees in the random-order arrival model). We elucidate our approach by studying the maximum independent set problem across a variety of geometric objects, while the techniques introduced are possibly applicable to a variety of combinatorial problems.
These algorithms are remarkably simple, and we demonstrate that, with the use of appropriate data structures, they operate in near-linear time. 
We obtain strongly $(\log n)^{O(1)}$-competitive algorithms for intervals, rectangles, $d$-dimensional hyperrectangles, fat objects, ellipsoids, and more. 

First, we consider the case of intervals with integral endpoints lying in $[0,K]$  for some positive integer $K=n^{O(1)}$. Note that $K$, in general, can be quite large, e.g., $2^{\Theta(n)}$ as even then the input representation requires only bit size that is polynomial in $n$. Later, we will see how to get rid of this assumption on the value of $K$.
We observe that, for intervals, if the intervals are of similar lengths, 
 or if the instance is sparse (not too many intervals lie within any unit range,
 i.e., the underlying graph has a small maximum degree), then 
the greedy algorithm is $O(1)$-competitive.
To leverage this observation, we categorize the intervals into $O(\log n)$ classes, ensuring that intervals in the same class have roughly similar lengths. The aim is to run the greedy algorithm on one of these classes, which has an independent set of size at least $\opt/O(\log n)$.
Our algorithm begins with an initial \emph{observation phase}, lasting for $n/2$ steps, at the end of which the algorithm picks a class that had the largest number of disjoint intervals. Subsequently, in the \emph{action phase}, our algorithm greedily selects intervals arriving only from the identified class. 
With a loss of an additional $O(\log\log n)$-factor, we ensure that the optimum value in classes with large independent sets is evenly concentrated in both phases, with high probability, thus fulfilling our aim.
The algorithm can be easily implemented in $O(n\log n)$ time using binary search trees.
This leads to the following theorem.

\begin{theorem}\label{thm:interBounded}
 There exists a strongly $O(\log n \cdot \log\log n)$-competitive ${O}(n\log n)$-time algorithm for interval scheduling in ROM, where the intervals have integral endpoints in $[0,K]$ and $K=n^{O(1)}$.
\end{theorem}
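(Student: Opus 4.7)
The plan is to exploit two observations: (i) the trivial online greedy is $O(1)$-competitive when restricted to intervals of approximately equal length, and (ii) since $K = n^{O(1)}$, the input splits into $L = O(\log n)$ geometric length classes $C_i := \{I \in \mathcal T : |I| \in [2^i, 2^{i+1})\}$. For (i), note that any interval in $C_i$ is intersected by at most a constant number of pairwise-disjoint intervals of $C_i$: each such interval has length $\geq 2^i$ and together they must lie inside a window of width $< 6 \cdot 2^i$ around the given one, so at most six fit. A standard charging of every missed $\opt(C_i)$-element to the greedy-picked interval that blocked it then shows online greedy restricted to $C_i$ returns an independent set of size $\geq s_i/O(1)$, where $s_i := \opt(C_i)$. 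In particular $T := \opt(\mathcal T) \leq \sum_i s_i$, so a heaviest class $j^\star$ satisfies $s_{j^\star} \geq T/L$.

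The algorithm uses an observation phase over the first $n/2$ arrivals in which it chooses nothing, merely bucketizing each arrival by class. At the end of the phase it computes, for every $i$, the offline maximum independent set size $\alpha_i$ among the first-half intervals of $C_i$ using the classical right-endpoint sweep (total time $O(n\log n)$), and sets $i^\star := \argmax_i \alpha_i$. In the action phase it runs online greedy but restricted to arrivals of $C_{i^\star}$; with a balanced BST keyed on the endpoints of currently-chosen intervals, each conflict check costs $O(\log n)$, giving the claimed overall runtime.

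For the analysis, let $X_i$ (resp.\ $Y_i$) denote the number of $\opt(C_i)$-elements falling in the observation (resp.\ action) phase; each is hypergeometric with mean $s_i/2$. A Chernoff-type bound gives $\pr[\min(X_i, Y_i) < s_i/4] \leq n^{-\Omega(\tau)}$ whenever $s_i \geq \tau \log n$, and choosing $\tau = \Theta(\log\log n)$ keeps the union bound over the $L = O(\log n)$ classes $o(1)$. Call a class heavy if $s_i \geq \tau \log n$ and light otherwise. In the main regime $T = \Omega(L \tau \log n) = \Omega(\log^2 n \cdot \log\log n)$, the light classes contribute at most $L \tau \log n \leq T/2$ to OPT, so some heavy $j^\star$ has $s_{j^\star} \geq T/(2L)$. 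Then w.h.p.\ $\alpha_{j^\star} \geq X_{j^\star} \geq s_{j^\star}/4$, and by the choice of $i^\star$, $\alpha_{i^\star} \geq \alpha_{j^\star}$; since $s_{i^\star} \geq \alpha_{i^\star}$, this forces $s_{i^\star} \geq s_{j^\star}/4 \geq \tau \log n$, so $i^\star$ is heavy too. Concentration applied to $i^\star$ gives $Y_{i^\star} \geq s_{i^\star}/4 \geq T/(8L)$, and action-phase greedy on $C_{i^\star}$ returns an independent set of size $\geq Y_{i^\star}/O(1) = T/O(\log n)$. In the complementary regime $T = O(\log^2 n \cdot \log\log n)$, the target $|\alg(\mathcal T)| \geq T/(C \log n \cdot \log\log n)$ is at most $O(\log n)$, and is handled by a direct tail bound showing that $C_{i^\star}$ contains enough mutually disjoint intervals in the action phase whenever any heavy class exists; otherwise the bound reduces to $|\alg| \geq 1$ and is argued separately.

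I expect the main obstacle to be precisely this joint concentration step, since the random permutation must simultaneously make $i^\star$ heavy (via $\alpha_{i^\star} \geq \alpha_{j^\star}$) and make every heavy class concentrate on both halves. The $\log\log n$ slack enters exactly here: it is what is needed to make the Chernoff tail small enough to union-bound across all $O(\log n)$ classes while still absorbing the light classes' contribution. The sweep-based construction of $\alpha_i$ and the BST-based action phase give the $O(n\log n)$ runtime almost immediately.
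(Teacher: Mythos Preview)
Your overall approach coincides with the paper's: split into $O(\log n)$ geometric length classes, observe the first half of the stream to estimate each class's optimum exactly, commit to the best class, and run online greedy on it in the second half. The two ingredients you use---greedy is $O(1)$-competitive on a single length class, and hypergeometric concentration balances $\OPT$ across the two halves for every sufficiently large class---are precisely the paper's ingredients, and your running-time argument matches as well.

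There is, however, a genuine gap in your small-$T$ regime, and it stems from a miscalibrated heavy threshold. You declare a class heavy when $s_i \geq \tau\log n$ with $\tau=\Theta(\log\log n)$, i.e., threshold $\Theta(\log n\cdot\log\log n)$. This is far stronger than the union bound requires---your tail $n^{-\Omega(\tau)}$ already beats the $O(\log n)$-term union bound for any constant $\tau$---but it inflates the ``no heavy class exists'' sub-case to $T < L\cdot\tau\log n = O(\log^2 n\cdot\log\log n)$. There your claim that ``the bound reduces to $|\alg|\geq 1$'' is false: at $T=\Theta(\log^2 n\cdot\log\log n)$, achieving ratio $O(\log n\cdot\log\log n)$ would require $|\alg|=\Omega(\log n)$, and you have no concentration guarantee for light classes to deliver that. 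The paper avoids this by setting the heavy threshold at $\Theta(\log k)=\Theta(\log\log n)$: the hypergeometric tail is then $\exp(-\Omega(\log\log n))=(\log n)^{-\Omega(1)}$, still small enough (with a suitable constant) for the union bound over $O(\log n)$ classes, and ``no heavy class'' now forces $T=O(\log n\cdot\log\log n)$, so selecting a single interval suffices. (The paper explicitly picks the last arriving interval if nothing else was chosen; your algorithm should do the same to guarantee $|\alg|\geq 1$.) With this one recalibration your argument goes through and is essentially identical to the paper's.
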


In fact, we demonstrate that our results hold for the general case of $(K,D)$-bounded instances (see \Cref{sec:Intervals} for the definition). Intuitively, this implies that the instance lies in $[0,K]$, and each unit range contains at most $D$ intervals. We observe that for intervals with lengths at most $1$, the greedy algorithm is $O(D)$-competitive. The observation becomes crucial for deciding which class to choose at the end of the observation phase; there is a trade-off between the optimal value available in a class and the loss in the competitive ratio that the greedy algorithm will incur on that class.
Such instances later play an important role in removing the assumption on $K$. 

Next, we show that our techniques can be generalized to hyperrectangles. One crucial difference, however, is that we cannot solve the independent set problem exactly for the various classes in the observation phase. Nevertheless, we show that we can still employ the greedy algorithm for estimating the optimal values for these classes. While a trivial implementation of the algorithm would take $\tilde{O}(n^2)$ time, the use of novel data structures 
allows us to improve the runtime to $O_d(n\log n)$ (see \Cref{sec:datastructures}).\footnote{$O_d(f(n))$ refers to a function in $O(f(n))$ when $d$ is a constant.} Intuitively, for each class, we maintain a uniform $d$-dimensional grid, where each grid cell stores all the hyperrectangles intersecting that cell. The grid cells are spaced in such a way that each cell is guaranteed to intersect only a constant number of hyperrectangles, and further, each hyperrectangle intersects only a constant number of grid cells. This ensures that the greedy algorithm only needs to examine $O(1)$ hyperrectangles (instead of $O(n)$) in order to decide whether or not to select an incoming hyperrectangle. We show that $d$ binary search operations (one for each dimension) suffice to enumerate the aforesaid hyperrectangles, implying a total execution time of $O_d(n\log n)$ for our algorithm.

\begin{theorem}\label{thm:HyperrecBounded}
For all constants $d\geq 2$, there exists a strongly $(\log n)^{O(d)}$-competitive  ${O}_d(n\log n)$-time algorithm for {\sc Maximum Independent Set of $d$-dimensional Hyperrectangles} in ROM when the hyperrectangles have integral endpoints in $[0,K]^d$ and $K=n^{(\log n)^{O(1)}}$.
\end{theorem}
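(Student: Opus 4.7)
The plan is to lift the two-phase strategy of \Cref{thm:interBounded} from intervals to $d$-dimensional hyperrectangles. Because $K = n^{(\log n)^{O(1)}}$, every side length lies in $[1,K]$ and falls into one of $O(\log K) = (\log n)^{O(1)}$ buckets according to $\lfloor \log_2(\text{side length})\rfloor$. Tagging each hyperrectangle by the $d$-tuple of these bucket indices partitions the input $\mathcal T$ into $\mathcal N = (\log n)^{O(d)}$ classes; within any single class all hyperrectangles agree on each side length up to a factor of $2$. By pigeonhole some class $C'$ satisfies $\opt(C') \ge \opt(\mathcal T)/\mathcal N$.

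The next step is to show that the natural online greedy---accept a hyperrectangle iff it is disjoint from all previously accepted ones---is $O_d(1)$-competitive on each individual class. Fix a class $C$ and impose an axis-aligned grid whose cell side length in dimension $i$ equals (twice) the common bucket length. Every hyperrectangle in $C$ then intersects at most $2^d$ grid cells, and a simple volume argument (hyperrectangles in $C$ have comparable size in every dimension, and every one intersecting a given cell lies in a dilated copy of it) shows that any single cell can host only $O_d(1)$ pairwise disjoint hyperrectangles from $C$. This bounds the inductive independence number of the intersection graph of $C$ by $O_d(1)$ in the cell-order, from which the competitiveness of greedy on $C$ is immediate. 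Moreover, for each arriving hyperrectangle the $O_d(1)$ cells it hits (and thus its potential conflicts) can be located via $d$ binary searches on per-dimension balanced trees in $O_d(\log n)$ time, as promised in \Cref{sec:datastructures}.

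Armed with these ingredients the algorithm proceeds as follows. During an \emph{observation phase} over the first $n/2$ arrivals, I maintain for every class $C$ the online greedy solution on the prefix and record its size $G_C$; since each arrival belongs to a unique class, this costs only $O_d(\log n)$ time per arrival in total. At the phase boundary I commit to $C^\star = \argmax_C G_C$. During the \emph{action phase} over the remaining $n/2$ arrivals, I greedily accept any incoming hyperrectangle that belongs to $C^\star$ and is disjoint from every already-accepted one, and return the resulting set. To establish strong $(\log n)^{O(d)}$-competitiveness I would combine three facts: (a) per-class greedy is an $O_d(1)$-approximation to the class optimum, shown above; (b) in the random-order model, whenever $\opt(C) = \Omega(d \log\log n)$, the optima of $C$ restricted to either half are both $\Theta(\opt(C))$ with probability $1 - 1/\mathrm{poly}(\mathcal N)$, by the same Chernoff-plus-union-bound argument used for intervals but now over $\mathcal N$ classes; and (c) consequently $G_{C^\star} \ge G_{C'}/O_d(1) = \Omega(\opt(C'))/O_d(1)$, so the action-phase greedy on $C^\star$ recovers $\Omega(\opt(C^\star)/O_d(1)) = \Omega(\opt(\mathcal T)/(\log n)^{O(d)})$ hyperrectangles.

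The main obstacle is point (c): unlike in the interval case, we cannot compute the exact independent-set size on a class during the observation phase, only the greedy proxy $G_C$. The redeeming observation is precisely the grid analysis of the second paragraph, which makes greedy an $O_d(1)$-approximation \emph{deterministically and online}, so $G_C$ is a faithful estimator of the sample's offline optimum and $\argmax_C G_C$ therefore selects an essentially-best class. Classes with $\opt(C) = O(d\log\log n)$ contribute, in aggregate, at most $\mathcal N \cdot O(d\log\log n) = (\log n)^{O(d)}$ and can be absorbed into the additive slack of the strong-competitiveness definition. Since each of the $n$ arrivals performs $O_d(\log n)$ data-structure work, the total running time is $O_d(n\log n)$; and since every failure event above is controlled by Chernoff bounds union-bounded over $\mathcal N$ classes, the competitive guarantee holds with probability $1 - o(1)$, yielding the claimed strong $(\log n)^{O(d)}$-competitive algorithm.
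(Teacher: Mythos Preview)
Your proposal is correct and follows essentially the same route as the paper: partition into $(\log K)^{O(d)}$ size classes, use \Cref{lem:greedyBounded}-style arguments to show greedy is $O_d(1)$-competitive on each class, run an observation phase with greedy as the estimator for $\opt(L_C)$, commit to the best class, and run greedy in the action phase; the high-probability guarantee comes from the same hypergeometric concentration plus union bound over the classes. The paper packages this inside the more general \Cref{lem:generalBounded} for $(K,D)$-bounded instances (with the extra $S_x$ classes and the balancing rule) because it needs that generality later for the scale-preparation step in \Cref{sec:KtofnMap}; for \Cref{thm:HyperrecBounded} itself, where integral endpoints give $D=0$, your direct argument suffices.

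One small correction: you write that classes with small optimum ``can be absorbed into the additive slack of the strong-competitiveness definition,'' but strong competitiveness has no additive term. The paper instead argues (and you should too) that if $\opt(\mathcal T)$ is below the $(\log n)^{O(d)}$ threshold, then outputting even a single hyperrectangle (which the algorithm guarantees by taking the last arrival if nothing else was selected) already meets the multiplicative $(\log n)^{O(d)}$ ratio; otherwise the concentration argument goes through.
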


Furthermore, we extend the results to $\sigma$-rectangular objects (see \Cref{defn:fat}). Intuitively, these objects can be made fat after some uniform scaling along the axes, and they encompass many important objects such as axis-aligned hyperrectangles, hyperspheres, fat objects, ellipsoids, etc.

\begin{theorem}\label{thm:fatBounded}
For all constants $d\geq 2$, there exists a strongly $(\sigma\log K)^{O(d)}$-competitive  $O_d(n\log n)$-time algorithm for the maximum independent set problem in ROM for $\sigma$-rectangular objects in $[0,K]^d$ and $K=n^{(\log n)^{O(1)}}$.
\end{theorem}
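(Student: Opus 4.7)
The plan is to lift the algorithm of \cref{thm:HyperrecBounded} from axis-aligned hyperrectangles to $\sigma$-rectangular objects by working through their axis-aligned bounding hyperrectangles. I would partition the input into $O((\log K)^d)$ classes based on the side-lengths of these bounding boxes, so that within one class all boxes agree up to a factor of $2$ in every coordinate. A single per-axis uniform rescaling (which preserves the disjointness relation on objects) then turns all bounding boxes in that class into near-cubes and, by the definition of $\sigma$-rectangularity, simultaneously turns each object inside its box into a $\sigma$-fat body of roughly unit diameter.

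The key technical claim is that the online greedy algorithm is $\sigma^{O(d)}$-competitive when restricted to a single class. Suppose greedy has committed to an object $o$, and let $o^\prime$ be any object from the offline optimum that conflicts with $o$; since both objects have near-unit diameter after rescaling, $o^\prime$ lies in a constant-radius neighborhood of $o$, and $\sigma$-fatness forces $o^\prime$ to occupy an $\Omega(\sigma^{-d})$ fraction of the volume of its own bounding near-cube. A standard volume-packing bound then caps the number of such offline competitors of $o$ at $\sigma^{O(d)}$, so the intersection graph restricted to one class has inductive independence number $\sigma^{O(d)}$, and a routine charging argument yields a $\sigma^{O(d)}$-competitive greedy on each sub-instance.

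Given this per-class guarantee, the rest of the proof mirrors \cref{thm:HyperrecBounded} essentially verbatim. I would maintain a greedy independent set in each of the $O((\log K)^d)$ classes in parallel during an observation phase of length $n/2$, identify the class whose observed greedy count is largest, and then commit to greedy inside that single class during the action phase. Composing the $(\log K)^{O(d)}$ factor from class selection with the $\sigma^{O(d)}$ factor from per-class greedy produces the claimed $(\sigma\log K)^{O(d)}$ competitive ratio with high probability. For the $O_d(n\log n)$ runtime, I would reuse the grid data structure of \cref{sec:datastructures} laid out over the bounding boxes of the active class; since each bounding box now encloses a $\sigma$-fat body, each grid cell still touches only $\sigma^{O(d)}$ previously selected objects, so disjointness of a new arrival is decided by $\sigma^{O(d)}$ explicit geometric primitives treated as constant-time.

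The main obstacle I anticipate is the concentration step used to justify the class-selection rule. In the hyperrectangle case, per-class greedy is effectively an $O(1)$-approximation of the class-optimum, so its observed value on a random half of the stream tightly tracks the class-restricted $\OPT$. For $\sigma$-rectangular objects the per-class greedy count is only a $\sigma^{O(d)}$-approximation, so I would need to argue that the observed greedy value (rather than the unknown class-optimum) concentrates sharply enough across the two halves that the argmax rule still commits to a class holding an $\Omega(1/(\log K)^d)$ share of $\OPT$ with high probability. I expect this to follow from the same Chernoff/martingale-style tail bound employed in the proofs of \cref{thm:interBounded,thm:HyperrecBounded}, applied to the greedy-value estimator instead of the exact-optimum estimator and with the extra $\sigma^{O(d)}$ slackness absorbed into the threshold used at the observation/action boundary.
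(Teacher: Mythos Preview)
Your approach is correct, but the paper takes a shorter route. Instead of re-deriving the observation/action machinery for the objects themselves, the paper simply replaces each arriving object $F$ by its circumscribing hyperrectangle $\Out(F)$ and feeds the resulting stream to the hyperrectangle algorithm of \cref{thm:HyperrecBounded} as a black box. Any independent set of bounding boxes is automatically an independent set of the underlying objects, so feasibility is free. The $\sigma^{O(d)}$ loss in the competitive ratio then comes from a single lemma (\cref{lem:fatobjects}): within a class where all bounding-box side-lengths agree up to a factor of~$2$, one has $\opt(\FF)\le O(\sigma^d)\cdot\opt(\Out(\FF))$, proved by exactly the volume-packing argument you sketch. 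Your variant of running greedy directly on the objects and arguing $\sigma^{O(d)}$-competitiveness per class also works, and is arguably more aggressive (it may accept an object whose bounding box overlaps a previously selected one even though the objects themselves are disjoint), but it requires testing intersection of general $\sigma$-rectangular bodies rather than boxes; that brings a $\sigma^{O(d)}$ factor into each independence update and threatens the clean $O_d(n\log n)$ runtime unless $\sigma$ is treated as constant.

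Your concentration worry at the end is a non-issue. In the proofs of \cref{lem:bounded} and \cref{lem:generalBounded}, the hypergeometric tail bound is applied to $|\OPT(S_i)\cap\HH_0|$---how many elements of a \emph{fixed} optimal solution land in each phase---never to the greedy estimate $\hat L_i$. The event $E$ asserts that $\opt(L_i)$ and $\opt(R_i)$ are each at least $(1-\delta)\opt(S_i)/2$; the per-class greedy approximation factor (whether $4^d$ for boxes or $\sigma^{O(d)}$ in your setting) is then used \emph{deterministically} to relate $\hat L_i$ to $\opt(L_i)$ and is simply absorbed into the final competitive ratio. No separate concentration for the greedy estimator is needed.
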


This provides an intriguing characterization of the problem; for thin 
objects, including arbitrary line segments, there is no known polynomial-time algorithm with a polylogarithmic approximation ratio, even in the offline setting. 

Our bounds on competitive ratios above are indeed $(\log K)^{O(d)}$.
Therefore, we assume $K$ to be quasi-polynomially bounded to obtain a  $(\log n)^{O(d)}$ guarantee. This assumption on the bounding box is common in many geometric problems, such as geometric knapsack~\cite{AdamaszekW15}, storage allocation problem~\cite{MomkeW20}, etc. 
For instance, the result of Henzinger et al.~\cite{Henzinger0W20} for hyperrectangles in the dynamic setting holds under this assumption.
Surprisingly, we are able to eliminate the assumption on $K$. For this, our algorithm has an initial {\em scale preparation} phase lasting for about $n/2$ steps to learn about the instances and project $[0, K]$ to $[0, n]$. Using hypergeometric concentration inequalities, we show that our projection creates a $(K,D)$-bounded instance with $K=O(n)$ and $D=O(\log n)$. 

Roughly speaking, the key observation that facilitates such a projection is as follows. 
Let $L$ denote the set of all left endpoints of the intervals in the input instance and $\hat L\subseteq L$ denote the set of those left endpoints observed during the scale preparation phase. Then, $\hat L$ forms a random subset of $L$ of size $n/2$ selected uniformly at random. Consequently, the left endpoints in $\hat L$ are evenly distributed among the left endpoints in $L$. In particular, between any two consecutive elements of $\hat L$ (when it is sorted), at most $O(\log n)$ elements of $L\setminus \hat L$ appear with high probability. In the subsequent non-uniform scale that we prepare, the left endpoints of $\hat L$ are mapped to $\{1,2,\dots,n/2\}$, while the remaining points on the real line are projected appropriately, maintaining their original ordering. When the intervals are read in this non-uniform scale, at most $O(\log n)$ intervals are contained within $(i,i+1)$ for all $i<n/2$ with high probability. Thus, the original input instance when viewed under such a non-uniform scale appears to be a $(K,D)$-bounded instance for $K=O(n)$ and $D=O(\log n)$.

Our technique of mapping $[0,K]$ to $[0, n]$ might prove valuable in other related problems (e.g., problems related to geometric intersection graphs of objects confined within a bounding box of arbitrary size).
Additionally, we crucially leverage the fact that axis-aligned rectangles remain axis-aligned rectangles under non-uniform scaling along the axes. 
We then apply our previous algorithms as subroutines and obtain the following theorems.
 
\begin{restatable}{theorem}{interUnbound}
\label{thm:interUnbound}
 There exists a strongly $O(\log n \cdot \log\log n)$-competitive ${O}(n\log n)$-time algorithm for interval scheduling in ROM.
\end{restatable}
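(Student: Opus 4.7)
The plan follows the roadmap sketched in the introduction: convert the unbounded instance into a $(K',D)$-bounded one via a preliminary \emph{scale preparation} phase, then feed the rescaled instance into the algorithm of \cref{thm:interBounded}. During the first $n/2$ arrivals, no interval is selected, but the algorithm records the set $\hat{L}$ of left endpoints seen. Since the arrival order is a uniformly random permutation, $\hat{L}$ is a uniformly random size-$n/2$ subset of the full left-endpoint set $L$. I then define a strictly monotone piecewise-linear map $\phi:[0,K]\to[0,n]$ that sends the $i$-th smallest element of $\hat{L}$ to $i$ and view every subsequently arriving interval $[l,r]$ as $[\phi(l),\phi(r)]$. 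Since $\phi$ is strictly monotone, it preserves disjointness, so independent sets (and in particular $\opt$) agree on the original instance and its $\phi$-image.

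The key concentration step asserts that, with high probability, at most $O(\log n)$ elements of $L\setminus\hat{L}$ lie between any two consecutive elements of sorted $\hat{L}$. This is a standard hypergeometric tail bound applied to each of the $n/2$ ``gaps'' followed by a union bound. In the rescaled coordinates, this translates into at most $O(\log n)$ intervals whose left endpoints fall in any unit range $(i,i+1)$ for $i<n/2$, which together with the length-based classification of intervals used inside \cref{thm:interBounded} suffices to render the rescaled instance $(K',D)$-bounded with $K'=O(n)$ and $D=O(\log n)$.

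In the action phase I would invoke the $(K,D)$-bounded routine of \cref{thm:interBounded} on the remaining $n/2$ arrivals, viewed through $\phi$. Conditioned on the identity of the first $n/2$ intervals, the remaining arrivals are in uniformly random order, so the random-order guarantee of \cref{thm:interBounded} still applies. A second hypergeometric concentration argument shows that any fixed maximum independent set has at least $\opt/2-o(\opt)$ of its intervals in the second $n/2$, with high probability. Combining the two, we obtain an independent set of cardinality $\Omega(\opt/(\log n\cdot\log\log n))$ with probability $1-o(1)$. Maintaining a balanced binary search tree over $\hat{L}$ supports each evaluation of $\phi$ in $O(\log n)$ time, preserving the overall $O(n\log n)$ runtime.

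The main obstacle is the two-sided concentration: the scale preparation must almost surely yield a $D=O(\log n)$ bound (upper tail on gap sizes) while also ensuring that the second half of the stream retains $\Omega(\opt)$ of the offline optimum (lower tail on the random split of the optimum set). Both are standard hypergeometric statements, but care is needed with the parameters to obtain $1-o(1)$ guarantees and to handle the conditioning between the two events. A secondary issue is checking that the $(K,D)$-bounded subroutine's competitive ratio for $K'=O(n)$ and $D=O(\log n)$ is indeed $O(\log n\cdot\log\log n)$, i.e., that the $O(D)$-competitiveness factor on the short-interval class is absorbed by the length-based classification and does not become the binding constraint.
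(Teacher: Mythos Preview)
Your plan is essentially the same as the paper's proof: a scale-preparation phase that reads the first $\lceil n/2\rceil$ left endpoints, a monotone piecewise-linear rescaling that makes the remaining stream $(K',D)$-bounded with $K'=O(n)$ and $D=O(\log n)$ via a gap/union-bound argument, a hypergeometric lower-tail bound so the second half retains $\Omega(\opt)$, and then an invocation of the bounded-instance algorithm, with a BST giving the $O(n\log n)$ runtime. One small labeling point: the subroutine you need is \cref{lem:bounded} (the general $(K,D)$-bounded statement with ratio $O(D+\log K\cdot\log\log K)$), not \cref{thm:interBounded} itself, which is the $D=0$ special case; your own final paragraph already anticipates exactly this, and plugging $D=O(\log n)$, $K=O(n)$ into \cref{lem:bounded} gives the desired $O(\log n\cdot\log\log n)$.
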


\begin{restatable}{theorem}{hyperUnbound}
\label{thm:hyperUnbound}
 For all constants $d\geq 2$, there exists a strongly $O((\log n)^d\cdot \log\log n)$-competitive ${O}_d(n\log n)$-time algorithm for {\sc Maximum Independent Set of $d$-dimensional Hyperrectangles} in ROM.
\end{restatable}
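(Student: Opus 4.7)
The plan is to reduce the unbounded problem to the $(K,D)$-bounded setting of \Cref{thm:HyperrecBounded}, via an initial \emph{scale preparation} phase that learns a non-uniform rescaling of the coordinate axes, and then simulate the bounded algorithm on the remaining arrivals under the learned scale. The crucial geometric fact being exploited is that axis-aligned hyperrectangles remain axis-aligned hyperrectangles under arbitrary monotone rescaling along each axis independently, so disjointness of the input hyperrectangles is preserved by the projection.

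First, I would split the $n$ arrivals into two halves of size $n/2$ each. In phase~1 (scale preparation) the algorithm selects nothing, and merely records the $2d$ endpoint coordinates of each arriving hyperrectangle. At the end of phase~1, for every dimension $i \in [d]$, let $L_i$ be the multiset of all $2n$ coordinates in dimension $i$, and $\hat L_i \subseteq L_i$ the subset of coordinates observed from the first $n/2$ hyperrectangles. Since the arrival order is a uniform random permutation, $\hat L_i$ is a uniformly random subset of $L_i$ of size $n$ (out of $2n$). Using standard hypergeometric (or negative-association) concentration inequalities, between any two consecutive elements of the sorted $\hat L_i$, at most $O(\log n)$ elements of $L_i \setminus \hat L_i$ appear, with probability $1 - n^{-\omega(1)}$; a union bound over all $O(n)$ gaps and all $d$ dimensions gives that this holds simultaneously with probability $1 - o(1)$.

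Next, I construct the rescaling: for each dimension $i$, map the sorted elements of $\hat L_i$ to the integers $\{1, 2, \ldots, n\}$, and interpolate monotonically elsewhere so that every coordinate in $L_i$ lands at an integer position in $[0, O(n)]$. Any hyperrectangle arriving in phase~2 is rewritten in this new integral coordinate system. By the concentration argument above, in each dimension, every unit interval $(j, j+1)$ contains at most $O(\log n)$ projected coordinates, so each axis-parallel unit cube in the new scale is pierced by at most $O(\log n)$ hyperrectangles per dimension. In particular, the projected phase~2 input is a $(O(n), (\log n)^{O(d)})$-bounded instance of \textsc{Maximum Independent Set of $d$-dimensional Hyperrectangles} in the sense needed by the bounded-case algorithm. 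I then feed the phase~2 arrivals, under the new coordinates, into the algorithm guaranteed by \Cref{thm:HyperrecBounded} (extended to the $(K,D)$-bounded regime as mentioned in the paper), which runs in $O_d(n\log n)$ time and is strongly $(\log n)^{O(d)}$-competitive (with the explicit $\log\log n$ factor arising from the $O(\log\log n)$ slack in \Cref{thm:interBounded}).

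Finally, I account for discarding the first half: in random order, for any fixed optimal independent set $\OPT(\mathcal{T})$, the number of its members arriving in phase~2 is a hypergeometric random variable with mean $|\OPT|/2$, so by Chernoff--Hoeffding at least $|\OPT|/2 - O(\sqrt{|\OPT|\log n})$ of them arrive in phase~2 with probability $1 - o(1)$; this is at least $|\OPT|/3$ when $|\OPT|$ is larger than a constant (and the additive constant $k$ in the definition of competitiveness handles the small-$|\OPT|$ regime). Conditioning on this event, on the successful rescaling event, and on the high-probability guarantee of the bounded algorithm, a final union bound yields a strongly $O((\log n)^d \cdot \log\log n)$-competitive algorithm with the claimed $O_d(n\log n)$ runtime. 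The main obstacle I anticipate is handling the subtle conditioning structure: the bounded algorithm's high-probability guarantee is stated for inputs arriving in uniformly random order, and I must verify that, conditioned on the phase-1 observation (which determines the rescaling), the phase-2 arrivals are still in uniformly random order on a fixed adversarial set under the rescaled coordinates, so that \Cref{thm:HyperrecBounded} may be invoked as a black box.
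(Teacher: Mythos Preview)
Your proposal is correct and follows essentially the same route as the paper: a scale-preparation phase on the first half of the arrivals, a per-axis monotone rescaling based on the observed coordinates, and then outsourcing the second half to the $(K,D)$-bounded algorithm (which in the paper is \Cref{lem:generalBounded}, the general form underlying \Cref{thm:HyperrecBounded}). Two minor points worth tightening: (i) the paper uses only the \emph{left} endpoints in each dimension rather than all $2d$ endpoints, though your variant works just as well; (ii) your bound $D=(\log n)^{O(d)}$ is too pessimistic---the $(K,D)$-boundedness condition is per dimension, so the gap argument already yields $D=O(\log n)$, and it is exactly this (via the $D^2$ term in \Cref{lem:generalBounded}) that produces the precise $O((\log n)^d\cdot\log\log n)$ rather than merely $(\log n)^{O(d)}$. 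Your closing observation about the conditioning structure is correct and is what the paper implicitly relies on: once the partition $\mathcal H_0\uplus\mathcal H_1$ is fixed, the arrival order within $\mathcal H_1$ is uniformly random, so the bounded algorithm can indeed be invoked as a black box.
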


Note that we only consider the unweighted setting. High probability guarantees are not possible in the weighted setting as even in the special case of the {\em Secretary problem} (when all intervals have the same start and endpoints, but may have different weights), the probability of success 
is at most $1/e$ \cite{dynkin1963optimal}.

In summary, we introduce a unified framework designed to achieve strongly competitive online algorithms in the random-order model when dealing with geometric objects confined within a bounded box (which may be of arbitrary size). As any permutation of the objects is equally likely, the underlying structure and arrangement of the objects are effectively captured within a constant fraction of the input. Consequently, with a negligible $O(1)$ loss in the competitive ratio, the size of the bounding box can be reduced to $K=n^{O(1)}$. Following this reduction, the input is carefully decomposed into $k=(\log n)^{O(1)}$ classes, each class $i$ admitting a nice $\alpha_i$-competitive algorithm. The effective balance of these $\alpha_i$'s results in an overall competitive ratio smaller than $\tilde O(k \cdot \max \alpha_i)$ for the original problem. For different problems, the ingenuity lies in devising a clever decomposition into classes, ensuring effective approximation of the classes and a balanced distribution of the approximation factors. 
Our algorithms for {\sc Maximum Independent Set of Hyperrectangles} are not only simple and fast but also nearly match the best offline approximation guarantees, falling short by only a $\log^2 n$ factor.

\subsection{Further Related Work}
The study of online interval scheduling was initiated by Lipton and Tomkins \cite{lipton1994online}, who presented an $O((\log \Delta)^{1+\epsilon})$-competitive algorithm when intervals arrive in the order of their start times, where $\Delta$ is the ratio of the lengths of the longest and shortest intervals in the input. However, due to the
$\Omega(n)$ lower bound in the adversarial-order arrival model, online interval scheduling has only been explored in restricted settings \cite{woeginger1994line, fung2008online, seiden1998randomized, zheng2013online, borodin, 10.1007/978-3-031-38906-1_14, BACHMANN20131, yu2020primal}.

Recently, De, Khurana, and Singh~\cite{de2021online} studied the online independent set and online dominating set problems, presenting an algorithm with a competitive ratio equal to the independent kissing number of the corresponding geometric intersection graph.\footnote{For a family of geometric objects $\SS$, the independent kissing number is defined as the maximum number of pairwise non-touching objects in $\SS$ that can be arranged in and around a particular object such that all of them are dominated/intersected by that object.} However, these results do not provide meaningful bounds for intervals or hyperrectangles, as considered in our work. 
Building on the work of Henzinger et al.~\cite{Henzinger0W20}, Bhore et al.~\cite{bhore2020dynamic} investigated the maximum independent set problem in the fully dynamic setting for various classes of geometric objects, including intervals and $d$-dimensional hypercubes. They showed that a $(1+\eps)$-approximation can be maintained in logarithmic update time for intervals, which generalizes to an $O(4^d)$-competitive algorithm with polylogarithmic update time for hypercubes.

The {\sc Maximum Independent Set of Hyperrectangles} is closely related to many fundamental problems in geometric approximation, such as geometric knapsack \cite{lGalvezGIHKW21, Jansen0LS22, GalvezRW21}, geometric bin packing \cite{BansalK14, KhanS23},  strip packing \cite{GalvezGAJKR23, KhanLMSW22}, etc. For more results on multidimensional packing and covering problems, see~\cite{ChristensenKPT17}. Various geometric problems have been extensively studied in the online setting, including geometric set cover \cite{KhanLRSW23}, hitting set \cite{even2014hitting}, piercing set \cite{de2022online} etc. The streaming model has also received some attention~\cite{cabello2017interval,bhore2022streaming, czumaj2022streaming}. Additionally, the random-order arrival model has been explored for classical problems such as 
knapsack \cite{AlbersKL21, KesselheimRTV18}, bin packing \cite{kenyon1995best, AlbersKL21a, AyyadevaraD0S22, Hebbar24}, set cover \cite{gupta2021random}, matroid intersection \cite{guruganesh2017online}, matching \cite{fu2021random}, and more. 

\textbf{Organization of the paper.} In \Cref{sec:prelim}, we discuss preliminaries. 
 \Cref{sec:Intervals} deals with intervals contained within a bounding box of size polynomial in $n$ and we prove \Cref{thm:interBounded}. We then generalize this result to hyperrectangles and other objects in  \Cref{sec:HyperrectGeneralize}, proving Theorems~\ref{thm:HyperrecBounded} and \ref{thm:fatBounded}. \Cref{sec:KtofnMap} contains the proofs for intervals and hyperrectangles confined within arbitrary sized bounding box:  Theorems~\ref{thm:interUnbound} and \ref{thm:hyperUnbound}.   Finally, \cref{sec:conclusion} contains some concluding remarks. 
 Due to space limitations, some parts have been moved to the appendix, including \Cref{prop:sizeObliviousLowerBound,thm:inducindep,prop:observationLowerBound}. Also, \cref{sec:datastructures} contains a discussion on how to efficiently implement our algorithms using appropriate data structures.

\section{Preliminaries}
\label{sec:prelim}
For all positive integers $n$, let $[n]:=\{ 1, 2, \dots, n\}$. 
We will assume the base of $\log$ is $e$, unless otherwise specified. For sets $A$ and $B$, $A\uplus B$ denotes their disjoint union.
For an interval $I$, let $||I||$ denote its length.
For the case of hyperrectangles, we will assume $\HH: = \{ H_1, H_2, \dots, H_n\}$ to be the input set of $n$ $d$-dimensional hyperectangles, where $H_i:=[x_{i,1}^{1}, x_{i,1}^{2}] \times [x_{i,2}^{1}, x_{i,2}^{2}]\times \dots \times [x_{i,d}^{1}, x_{i,d}^{2}]$. Here, $0\le x_{i,j}^1 \le x_{i,j}^2  \le K$ for all $i\in[n], j \in[d]$, and the length of $H_i$ along dimension $j$ is $l_j(H_i):= x_{i,j}^{2}- x_{i,j}^{1}$.
Two objects $H_i$  and $H_j$ {\em intersect} if $H_i \cap H_j\neq\emptyset$. Let $\OPT(\HH)$ denote a maximum independent set for the intersection graph corresponding to $\HH$. The {\sc Maximum Independent Set of Hyperrectangles} problem aims to determine $\OPT(\HH)$. Let $\opt(\HH) := |\OPT(\HH)|$ be the size of
an optimal solution.

The following lemma will be crucial in our analysis (see \Cref{sec:hoeffding} for a proof).

\begin{restatable}[Hypergeometric concentration]{lemma}{hoeffding}
\label{lem:hoeffding}
There are $N$ balls of which $M$ are red. Some $n$ balls are sampled from the $N$ balls uniformly at random without replacement. Let $X$ be the number of red balls that appear in the sample. Let $p=\frac M N$ and $0\leq \delta \leq 1$. Then,
\begin{align*}
\pr[X \geq (1+\delta) pn]\leq \exp(-\delta^2 pn /3); \
\pr[X \leq (1-\delta) pn]\leq \exp(-\delta^2 pn /2).
\end{align*}
\end{restatable}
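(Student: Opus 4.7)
The plan is to reduce the hypergeometric tail bounds to the standard multiplicative Chernoff bounds for the binomial distribution, using a classical result of Hoeffding showing that sampling without replacement is, in a precise sense, at least as concentrated as sampling with replacement.

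First I would write $X = \sum_{i=1}^{n} X_i$, where $X_i$ is the indicator that the $i$-th sampled ball is red. Although the $X_i$'s are not independent under sampling without replacement, each $X_i$ is marginally Bernoulli$(p)$, so $\mathbb{E}[X] = pn$. For the upper tail, I would then use the exponential Markov inequality: for any $t > 0$,
\[
\pr[X \geq (1+\delta) pn] \;\leq\; \frac{\mathbb{E}[e^{tX}]}{e^{t(1+\delta) pn}}.
\]
The key step is bounding $\mathbb{E}[e^{tX}]$. Here I would invoke Hoeffding's theorem (1963): if $Y = \sum_{i=1}^{n} Y_i$ is the analogous sum under sampling \emph{with} replacement, so that the $Y_i$'s are i.i.d.\ Bernoulli$(p)$, then for every continuous convex function $\phi$ one has $\mathbb{E}[\phi(X)] \leq \mathbb{E}[\phi(Y)]$. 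Applying this to the convex function $\phi(x) = e^{tx}$ yields $\mathbb{E}[e^{tX}] \leq \mathbb{E}[e^{tY}] = (1 - p + p e^{t})^{n}$.

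Once this reduction is in hand, the argument follows the textbook Chernoff-bound derivation for sums of i.i.d.\ Bernoullis. Optimizing the free parameter $t > 0$ and combining the inequality $1 + x \leq e^{x}$ with the estimate $(1+\delta)\ln(1+\delta) - \delta \geq \delta^{2}/3$ for $\delta \in [0,1]$ produces the upper tail bound $\pr[X \geq (1+\delta) pn] \leq \exp(-\delta^{2} pn / 3)$. A symmetric argument with $t < 0$, using $(1-\delta)\ln(1-\delta) + \delta \geq \delta^{2}/2$ for $\delta \in [0,1]$, yields the lower tail bound $\pr[X \leq (1-\delta) pn] \leq \exp(-\delta^{2} pn / 2)$.

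The main obstacle is justifying the Hoeffding reduction from without-replacement to with-replacement sampling, as this is the only nontrivial probabilistic ingredient; the rest is a mechanical repetition of the Chernoff calculation. Since this reduction is a classical result (Hoeffding 1963, Theorem 4), I would simply cite it rather than reprove it. An alternative route, which I would mention as a sanity check, is to invoke the negative association of the hypergeometric indicator variables (Joag-Dev and Proschan 1983), which directly implies the MGF bound $\mathbb{E}[e^{tX}] \leq \prod_{i=1}^{n} \mathbb{E}[e^{tX_i}] = (1 - p + p e^{t})^{n}$ and yields the same conclusion.
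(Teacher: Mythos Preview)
Your proposal is correct but reaches the key intermediate bound by a different route than the paper. The paper cites a result of Chv\'atal~\cite{Chvatal79}, which bounds the hypergeometric tail directly by the relative-entropy-type expression $\bigl((\tfrac{p}{p+t})^{p+t}(\tfrac{1-p}{1-p-t})^{1-p-t}\bigr)^n$; setting $t=\delta p$ and using $1+x\le e^x$ yields $\bigl(e^{\delta}/(1+\delta)^{1+\delta}\bigr)^{pn}$. You instead dominate the hypergeometric MGF by the binomial MGF via Hoeffding's 1963 theorem (or negative association) and then run the textbook Chernoff optimization, arriving at the same intermediate expression. From that point the two arguments perform identical calculus, verifying $(1+\delta)\ln(1+\delta)-\delta\ge\delta^2/3$ and $(1-\delta)\ln(1-\delta)+\delta\ge\delta^2/2$. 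Your route has the merit of making transparent \emph{why} the hypergeometric inherits the binomial's tail bounds, and it sidesteps the boundary cases $(1\pm\delta)p\in\{0,1\}$ that the paper handles separately (Chv\'atal's inequality is stated only for $0<p+t<1$); the paper's route buys a single-citation starting point but then patches those corner cases by hand.
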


\subsection{Greedy Algorithm}
Throughout the paper, we let $\greedy$ represent the greedy algorithm, which selects an object on arrival if it does not intersect previously selected objects. Clearly, $\greedy$ returns a maximal independent set of the geometric intersection graph. We first show that the competitive ratio of $\greedy$ can be $\Omega(\sqrt{n})$ in expectation on arbitrary inputs.

\begin{figure}[h]
    \centering
     \includegraphics[width=0.6\linewidth]{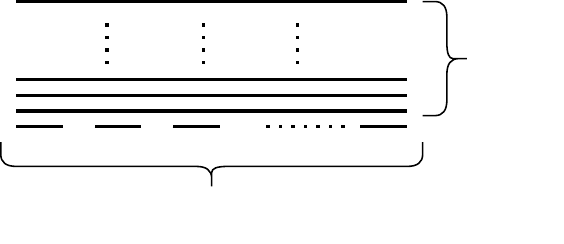}
    \caption{Instance $\II$}
    \label{fig:LBinstance}
\end{figure}
\begin{theorem}
\label{thm:greedy lower}
    $\greedy$ has a competitive ratio of at least $\Omega(\sqrt{n})$ in the random-order model.
\end{theorem}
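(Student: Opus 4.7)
The plan is to construct a single family of instances $\II$ (parameterized by $n$) on which greedy's expected value is $O(1)$ while $\opt(\II) = \Omega(\sqrt n)$, which immediately yields the claimed lower bound.

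Concretely, I would set $k := \lfloor \sqrt n \rfloor$ and let $\II$ consist of two groups: $(i)$ $k$ pairwise disjoint \emph{short} intervals $S_1,\ldots,S_k$ sitting inside $[0,1]$, say $S_j = [(j-1)/k,\,(j-1)/k + 1/(2k)]$; and $(ii)$ $n-k$ pairwise overlapping \emph{long} intervals $L_1,\ldots,L_{n-k}$, each of which contains every short interval, say $L_i = [-\eps_i,\,1+\eps_i]$ for distinct tiny $\eps_i > 0$. This is precisely the configuration shown in \Cref{fig:LBinstance}. The three structural properties I would lean on are: the $L_i$'s mutually overlap, every $L_i$ contains every $S_j$, and the $S_j$'s are pairwise disjoint.

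From these properties I would argue that any independent subset of $\II$ either lies inside $\{S_1,\ldots,S_k\}$ or consists of a single $L_i$, so $\opt(\II) = k = \Omega(\sqrt n)$. I would then analyze $\greedy$ by conditioning on the first interval to arrive, which in the random-order model is uniform over $\II$. With probability $(n-k)/n$ it is some $L_i$; greedy picks it and then rejects every remaining interval (each conflicts with $L_i$), so the final value is $1$. With probability $k/n$ it is some $S_j$; greedy picks $S_j$, then rejects every $L_i$ on arrival (each conflicts with $S_j$), and ends up selecting every remaining $S_{j'}$ because the short intervals are pairwise disjoint and no $L_i$ is ever in the solution. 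Greedy's value in this case is $k$.

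Combining the two cases gives $\mathbb{E}[|\greedy(\II)|] \le (1 - k/n)\cdot 1 + (k/n)\cdot k \le 2$, hence $\opt(\II)/\mathbb{E}[|\greedy(\II)|] = \Omega(\sqrt n)$, as required. There is no serious obstacle: the entire content of the argument is the design of an instance in which ``bad'' first arrivals (the $L_i$'s) are overwhelmingly more likely than ``good'' first arrivals (the $S_j$'s), and a single bad first arrival is enough to ruin greedy for the whole run. The only minor points worth care are ensuring the $L_i$'s are pairwise distinct (handled by the perturbations $\eps_i$) and checking on the good event that no $L_i$ ever enters the solution after $S_j$ is chosen, which is immediate from $S_j \subset L_i$.
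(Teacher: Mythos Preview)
Your proposal is correct and follows essentially the same approach as the paper: the paper uses the identical construction of $\sqrt n$ disjoint short intervals together with $n-\sqrt n$ long intervals that overlap all of them, conditions on whether the first arrival is short or long, and obtains $\mathbb{E}[|\greedy(\II)|]\le 2$ against $\opt=\sqrt n$. The only cosmetic differences are that the paper lets the long intervals be identical copies rather than perturbed, and writes $\sqrt n$ in place of $\lfloor \sqrt n\rfloor$.
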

\begin{proof}
    Consider the input instance $\II:= \II_1\cup \II_2$, where $\II_1$ consists of $\sqrt{n}$ disjoint unit-length intervals, and $\II_2$ comprises $n-\sqrt{n}$ identical intervals each overlapping all intervals in $\II_1$ (see \Cref{fig:LBinstance}). Clearly, the optimal solution consists of all intervals in $\II_1$. Now when the input $\II$ is presented in a random order, we have two possible cases to consider.
    
    \textit{Case 1:} The first interval to arrive belongs to $\II_1$. This happens with probability $\sqrt{n}/n=1/\sqrt{n}$. In this case, $\greedy$ selects this interval and all future intervals from $\II_1$ as and when they arrive, but no interval from $\II_2$ is selected. 

    \textit{Case 2:} The first interval to arrive is from $\II_2$. In this case, $\greedy$ selects this interval and therefore no further interval is chosen in any future iteration.

    Hence, the expected number of intervals selected by $\greedy$ is upper bounded by $(1/\sqrt{n})\cdot |\II_1|+1=2$. Since $\opt(\II)=|\II_1|=\sqrt{n}$, we have a lower bound of $\Omega(\sqrt{n})$ as claimed.
\end{proof}

Next, we state two properties of $\greedy$ that we shall frequently use. The first property states that $\greedy$ performs well when the input instance consists of hyperrectangles of similar size.

\begin{figure}[h]
    \centering
    \includegraphics[width=0.4\linewidth]{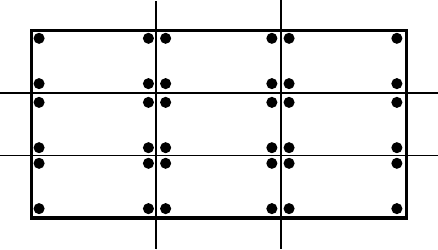}
    \caption{For $d=2$, $\Delta=3$, a piercing set of $\HH$ can be formed by placing 36 points inside each $\HH'$}
    \label{fig:piercingset}
\end{figure}

\begin{restatable}{lemma}{greedyBounded}
\label{lem:greedyBounded}
Let $\HH$ be a set of hyperrectangles such that for any $H_1, H_2\in \HH$ and for any dimension $j\in [d]$, we have $l_j(H_1)/l_j(H_2) \in [1/\Delta, \Delta]$ for some integer $\Delta$. Let $\HH'$ be any maximal independent subset of hyperrectangles from $\HH$. then $\opt(\HH) \le (2\Delta)^d \cdot |\HH'|$.
\end{restatable}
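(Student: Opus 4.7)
The plan is to reduce the lemma to a piercing argument. For each $H' \in \HH'$, I would construct a set $P(H') \subseteq H'$ of $(2\Delta)^d$ points with the property that every $H \in \HH$ satisfying $H \cap H' \ne \emptyset$ contains at least one point of $P(H')$. Given such sets, the bound follows by a standard charging argument: since $\HH'$ is maximal, every $H^* \in \OPT(\HH)$ intersects some $H' \in \HH'$ (trivially if $H^* \in \HH'$). Assign each $H^*$ to one such $H'$; then, by pairwise disjointness of the objects in $\OPT(\HH)$, each point of $P(H')$ lies in at most one assigned $H^*$, so at most $(2\Delta)^d$ elements of $\OPT(\HH)$ are charged to any single $H' \in \HH'$. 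Summing over $\HH'$ yields $\opt(\HH) \le (2\Delta)^d \cdot |\HH'|$.

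The construction of $P(H')$ is a product grid: in each axis $j$, place the $2\Delta$ equally spaced positions $b_j + k \cdot l_j(H')/(2\Delta - 1)$ for $k = 0, 1, \ldots, 2\Delta - 1$, where $[b_j, b_j + l_j(H')]$ is the projection of $H'$ onto axis $j$, and let $P(H')$ be the Cartesian product of these $d$ one-dimensional grids, so $|P(H')| = (2\Delta)^d$. The piercing claim then reduces to a one-dimensional statement: every closed interval $I$ of length at least $l/\Delta$ that intersects $[b, b+l]$ contains one of the points $b + k \cdot l/(2\Delta - 1)$ for $k \in \{0, 1, \ldots, 2\Delta - 1\}$. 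I would verify this by cases on how $I$ sits relative to $[b, b+l]$: if $I$ extends past $b$ on the left then $b \in I$ by the intersection condition, symmetrically $b + l \in I$ if $I$ extends past $b + l$ on the right, and if $I \subseteq [b, b+l]$ then $|I| \ge l/\Delta \ge l/(2\Delta - 1)$, which equals the grid spacing, so a closed interval of that length on an equally spaced grid necessarily contains a grid point. Applied dimension by dimension — using $l_j(H) \ge l_j(H')/\Delta$, which is the ratio hypothesis for $H, H' \in \HH$ — this yields, on each axis, a grid coordinate lying inside the projection of $H$; the resulting $d$-tuple is then a point of $P(H')$ contained in $H$.

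The main subtlety, and the reason I would include the endpoints $b_j$ and $b_j + l_j(H')$ in the one-dimensional grid, is the behaviour of objects that barely overlap $H'$ in some dimension. Such an $H$ may fail to contain any interior grid position along that axis, and the only guarantee available is that its projection crosses one boundary of $H'$'s projection; including that boundary as a grid position is exactly what is needed to pierce these troublesome objects, and is what keeps the count at $(2\Delta)^d$ rather than $(2\Delta + 1)^d$. Once this one-dimensional claim is verified, the $d$-dimensional piercing and the final charging are routine.
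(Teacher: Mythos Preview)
Your proposal is correct and follows essentially the same piercing-set approach as the paper: build, inside each $H'\in\HH'$, a product grid of $(2\Delta)^d$ points so that every $H\in\HH$ intersecting $H'$ contains a grid point, then conclude via a charging/piercing argument. The only cosmetic differences are that the paper obtains its grid by subdividing $H'$ into $\Delta^d$ homothetic cells and taking their corners (effectively $(\Delta+1)^d$ distinct points, which it upper-bounds by $(2\Delta)^d$), whereas you directly place $2\Delta$ equally spaced points per axis; and the paper routes the final step through the minimum piercing set while you phrase it as a direct charge from $\OPT(\HH)$ to $\HH'$.
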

\begin{proof}
    Let $P^*$ be the minimum piercing set of $\HH$, i.e., a set of points of minimum size such that each hyperrectangle in $\HH$ is stabbed by at least one point in $P^*$. Then clearly $\opt(\HH)\le |P^*|$, since no two hyperrectangles in $\opt(\HH)$ can be stabbed by the same point in $P^*$. We next show that $|P^*|\le (2\Delta)^d \cdot |\HH'|$, thus completing the proof. Consider any $H\in \HH'$. We divide $H$ into $\Delta^d$ identical homothetic copies using $\Delta-1$ equally-spaced axis-aligned hyperplanes in each dimension (see \Cref{fig:piercingset}). Then observe that any hyperrectangle that intersects $H$ must contain a corner of at least one of these smaller hyperrectangles because of the assumption on the ratio of the side lengths. Since $\HH'$ is maximal, it follows that the set of $(2\Delta)^d$ corners of these $\Delta^d$ hyperrectangles form a piercing set of the hyperrectangles intersecting $H$. Hence $|P^*|\le (2\Delta)^d \cdot |\HH'|$. 
\end{proof}

The next lemma is based on the fact that $\greedy$ performs satisfactorily when the maximum degree of the underlying intersection graph is small. 

\begin{restatable}{lemma}{greedyCompetitive}
\label{lem:greedyCompetitiveRatioprelim}
    Let $\HH$ be a set of hyperrectangles such that $l_j(H_i)\le 1$ for all $H_i \in \HH$ in some dimension $j \in [d]$. Also suppose for each integer $m\in [K]$, the number of hyperrectangles $H_i$ with $x_{i,j}^1 \in (m-1,m)$ is at most $D$, where $D \in \mathbb{N}$. Then $\opt(\HH) \le 3D \cdot |\greedy(\HH)|$.
\end{restatable}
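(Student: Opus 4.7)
The plan is to prove this by a direct charging argument. Let $\greedy(\HH) = \{G_1, \ldots, G_s\}$ be the output of the greedy algorithm and let $\OPT(\HH) = \{O_1, \ldots, O_t\}$ be an optimal independent set. Since $\greedy$ returns a maximal independent set, every $O_i \in \OPT(\HH)$ either already lies in $\greedy(\HH)$ or intersects some $G_k \in \greedy(\HH)$ (otherwise $O_i$ would have been added). For each $O_i$, I would fix an arbitrary such $G_k$ it intersects and charge $O_i$ to $G_k$. It then suffices to show that every $G_k$ receives at most $3D$ charges, which gives $\opt(\HH) = t \le 3D \cdot s = 3D \cdot |\greedy(\HH)|$.

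To bound the charge on a fixed $G_k$, I would focus entirely on the distinguished dimension $j$. Write the projection of $G_k$ onto dimension $j$ as $[\ell, r]$, and note that $r - \ell = l_j(G_k) \le 1$. Let $m$ be the integer with $\ell \in (m-1, m]$ (boundary cases can be absorbed into a neighboring unit cell without affecting the argument). Then the projection of $G_k$ lies inside $(m-1, m+1]$.

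Now consider any $O_i$ charged to $G_k$. Its projection $[\ell', r']$ onto dimension $j$ must intersect $[\ell, r]$, and since $l_j(O_i) \le 1$ as well, $\ell' \in [\ell - 1, r] \subseteq (m-2, m+1]$. Thus the left endpoint of $O_i$ in dimension $j$ falls into one of the three unit cells $(m-2, m-1]$, $(m-1, m]$, $(m, m+1]$. By the bounded-density hypothesis, each of these cells contains at most $D$ left endpoints, so at most $3D$ candidates for $O_i$ exist, which is the desired bound.

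I do not expect significant obstacles; the only mild technical point is handling left endpoints that sit exactly at integer values so that the unit cells cover them consistently. This is a cosmetic issue, resolved by choosing half-open cells $(m-1, m]$ (or equivalently by treating integer-coordinate inputs with a standard tie-breaking convention), and it does not change the factor $3$.
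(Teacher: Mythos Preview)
Your proposal is correct and is essentially the paper's argument: both observe that any hyperrectangle intersecting a fixed one must have its dimension-$j$ left endpoint in one of three consecutive unit cells, giving at most $3D$ such neighbors. The paper phrases this as a max-degree bound of $3D-1$ on the intersection graph and then invokes the folklore fact $|\greedy|\ge n/(\Delta+1)$, whereas you package the same observation as a direct charging from $\OPT(\HH)$ to $\greedy(\HH)$; the content is the same.
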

\begin{proof}
    The conditions of the lemma imply that each hyperrectangle in $\HH$ can overlap with at most $3D-1$ other hyperrectangles. Therefore, the maximum degree of a node in the intersection graph of $\HH$ is at most $3D-1$. It is a folklore result that the greedy algorithm selects an independent set of size at least $n/(\Delta +1)$ in any $n$-vertex graph with maximum degree $\Delta$. Therefore $\opt(\HH) \le 3D \cdot |\greedy(\HH)|$. 
\end{proof}

\section{Online Algorithms For Intervals}
\label{sec:Intervals}
In this section, we prove \Cref{thm:interBounded}. 
We first prove the following general lemma for \textit{bounded} instances, which are defined as follows.

\begin{definition}
   A set of intervals $\mathcal I$ is {\em $(K,D)$-bounded} if each $I\in \mathcal I$ is a subset of $[0,K]$ and for all $i \in\mathbb{Z}$, the number of intervals from $\mathcal I$ that are a subset of $(i, i + 1)$ is at most $D$.
\end{definition}

We show that there exists an online algorithm which on $(K,D)$-bounded instances is $O(D + \log K\cdot \log\log K)$-competitive with probability $1-o_K(1)$.

\begin{lemma}\label{lem:bounded}
    There exists absolute constants $\lambda \in(0,1)$, $\mu \geq 1$, and an online algorithm such that for each $(K,D)$-bounded instance $\mathcal I$, where $K$ and $n:=|\mathcal I|$ are large enough, the algorithm outputs an independent set from $\mathcal I$ of size at least $\frac \lambda {D+\log K\cdot \log\log K}\cdot\opt(\mathcal I)$ with probability at least $1-\frac{\mu}{ \log K}$.
\end{lemma}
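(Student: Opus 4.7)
My plan is to partition the intervals by length into $m+1=O(\log K)$ classes and then run an observe-then-commit algorithm that greedily exploits, in hindsight, the best class. Class $0$ consists of all intervals of length $\le 1$, and for each $c\in\{1,\dots,m:=\lceil\log K\rceil\}$, class $c$ consists of intervals of length in $[2^{c-1},2^c)$. Write $\alpha_c$ for the competitive ratio of $\greedy$ when run on class $c$ alone: by \Cref{lem:greedyBounded} (with $\Delta=2$, $d=1$), $\alpha_c\le 4$ for every $c\ge 1$, while \Cref{lem:greedyCompetitiveRatioprelim} together with the $(K,D)$-bounded hypothesis yields $\alpha_0=O(D)$. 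The algorithm reserves the first $n/2$ arrivals as an \emph{observation phase} in which it selects nothing; at its end it computes $\opt_c^{\text{obs}}$ exactly (via the standard offline earliest-deadline greedy) for every class $c$, commits to $c^\star:=\argmax_c \opt_c^{\text{obs}}/\alpha_c$, and in the remaining \emph{action phase} runs $\greedy$ restricted to arrivals of class $c^\star$.

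For the analysis, fix in each class an optimum independent set $S_c^\star$ of size $\opt_c$, so that $\sum_c\opt_c\ge\opt(\mathcal I)$. Because the arrival order is a uniform permutation, $|S_c^\star\cap\text{obs}|$ and $|S_c^\star\cap\text{act}|$ are hypergeometric with mean $\opt_c/2$. Setting the threshold $T:=C\ln\ln K$ for a sufficiently large constant $C$ and applying \Cref{lem:hoeffding} with $\delta=1/2$, every class with $\opt_c\ge T$ satisfies both $\opt_c^{\text{obs}}\ge\opt_c/4$ and $\opt_c^{\text{act}}\ge\opt_c/4$ with failure probability $O(\log^{-3}K)$. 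A union bound over the $O(\log K)$ classes then defines a \emph{good event} of probability at least $1-\mu/\log K$ on which these inequalities hold simultaneously for every large class.

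I may assume $\opt(\mathcal I)\ge C'(D+\log K)\log\log K$ for a suitably large constant $C'$; otherwise $\lambda\cdot\opt(\mathcal I)/(D+\log K\cdot\log\log K)<1$ and the empty output already satisfies the lemma. A short case split then produces a deterministic \emph{witness class} $c^{\star\star}$ with $\opt_{c^{\star\star}}\ge T$ and $\opt_{c^{\star\star}}/\alpha_{c^{\star\star}}\ge\opt(\mathcal I)/O(D+\log K)$: if $\opt_0\ge\opt(\mathcal I)/2$, take $c^{\star\star}=0$ (paying $\alpha_0=O(D)$); otherwise $\sum_{c\ge 1}\opt_c\ge\opt(\mathcal I)/2$, so some $c\ge 1$ has $\opt_c\ge\opt(\mathcal I)/(2m)$ (paying only $\alpha_c\le 4$). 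On the good event, concentration for $c^{\star\star}$ gives $\opt_{c^{\star\star}}^{\text{obs}}/\alpha_{c^{\star\star}}\ge\opt(\mathcal I)/O(D+\log K)$; the selection rule then forces $\opt_{c^\star}^{\text{obs}}/\alpha_{c^\star}\ge\opt_{c^{\star\star}}^{\text{obs}}/\alpha_{c^{\star\star}}$, whence $\opt_{c^\star}^{\text{obs}}\ge T$ and hence $\opt_{c^\star}\ge T$. The good event therefore also applies to $c^\star$, so the action-phase greedy on class $c^\star$ outputs at least $\opt_{c^\star}^{\text{act}}/\alpha_{c^\star}\ge\opt(\mathcal I)/O(D+\log K)$ intervals, as required.

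The main conceptual obstacle is that $c^\star$ is itself a function of the random arrival order, so one cannot directly invoke concentration for it; the resolution is to apply concentration uniformly to every class and then use the algorithm's selection rule to bootstrap the deterministic witness $c^{\star\star}$ into a lower bound on $\opt_{c^\star}$ itself. The additive $\log\log K$ factor appearing in the denominator is precisely the concentration threshold $T$ demanded after the union bound over the $\log K$ classes.
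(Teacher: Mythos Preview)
Your overall architecture---length partition into $O(\log K)$ classes, observe/act phases, exact computation of $\opt_c^{\text{obs}}$, hypergeometric concentration applied uniformly, then bootstrapping a deterministic witness---is exactly the paper's. But there is a real gap in the ``may assume'' step that prevents you from reaching the stated bound.

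The sentence ``otherwise $\lambda\cdot\opt(\mathcal I)/(D+\log K\cdot\log\log K)<1$ and the empty output already satisfies the lemma'' is doubly wrong. First, even when the target is below $1$, the empty set has size $0$, which is not at least a positive number; you must guarantee one interval is picked (the paper's algorithm selects the last arrival if nothing was taken). Second, and more seriously, the implication itself fails when $D\gg\log K$: your threshold $C'(D+\log K)\log\log K$ can exceed $D+\log K\log\log K$ by a factor of $\log\log K$, so the target need not drop below $1$. Tracing this through, your argument only delivers competitive ratio $O\bigl((D+\log K)\log\log K\bigr)$, not the lemma's $O(D+\log K\log\log K)$.

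This slack is forced by your selection rule $c^\star=\argmax_c \opt_c^{\text{obs}}/\alpha_c$ with $\alpha_0=\Theta(D)$. To certify that the good event applies to $c^\star$ you need $\opt_{c^\star}^{\text{obs}}\ge T$; but if the witness is class $0$ while $c^\star\ge 1$, you only get $\opt_{c^\star}^{\text{obs}}\ge\Omega(\opt/D)$, which forces the threshold up to $\Omega(D\cdot T)$. The paper sidesteps this by weighting class $0$ not by its greedy loss $D$ but by the \emph{number of classes} $k=\Theta(\log K)$: it chooses $m=0$ only when $\opt(L_0)>k\cdot\max_{i\ge 1}\opt(L_i)$. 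With that rule the argument ``$m\notin B\Rightarrow\sum_i\opt(L_i)=O(k\log k)$'' is $D$-free, so the threshold can be $O(k\log k)$ and the overall ratio becomes $O(\max(k\log k,\,D+k))=O(D+\log K\log\log K)$ as claimed.
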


\begin{proof}
Let $\eps:=1$  
and $k:= \lceil \log_{1+\eps} K\rceil$. 
We partition $\mathcal I$ into $k+1$ sets: $S_0, \dots, S_k$, depending on the lengths of the intervals in $\mathcal I$, such that $S_0:=\{ I \in \mathcal I \mid ||I|| \in [0,1] \}$ and $S_i := \{ I\in \mathcal I \mid ||I|| \in ((1+\eps)^{i-1},(1+\eps)^i]\}$ for all $i \in [k]$. We now describe our algorithm before moving to its analysis.

\textbf{Algorithm.} 
     The algorithm has two phases: the \textit{observation} phase, when it receives the first $\lceil \frac n 2 \rceil$ intervals (we call them $\mathcal I_0$) and the \textit{action} phase, when it receives the remaining $n-\lceil\frac n 2 \rceil$ intervals (we call them $\mathcal I_1$).
    Note that $\mathcal I=\mathcal I_0 \uplus \mathcal I_1$, where $\mathcal I_0$ is a uniformly random subset of $\mathcal I$ of size $\lceil\frac n 2\rceil$, and $\mathcal I_1$ is a uniformly random subset of $\mathcal I$ of size $n-\lceil\frac n 2\rceil$.
    \begin{itemize}
    \item{\textbf{Observation phase}} In this phase intervals in $\mathcal I_0$ arrive.
    \begin{itemize}
    \item{Initialization:} Initialize $L_i=\emptyset$ for $i\in\{0\}\cup[k]$. 
    \item When a new interval $I\in \mathcal I_0$ arrives, add $I$ to $L_i$ iff $I\in S_i$.
    \item At the end of the observation phase, all $L_i$'s are such that  
    $L_i = \mathcal I_0 \cap S_i.$ 
    Compute $\opt(L_i)$ for each $i\in \{0\}\cup [k]$. 
    \item
   Compute the index $m\in \{0\}\cup[k]$ as follows. Let $m\in[k]$ be an index such that $\opt(L_m) \geq \opt(L_i)$ for all $i\in [k]$. If $\opt(L_0) > k \cdot \opt(L_m)$, set $m=0$.
    \end{itemize}
    \item{\textbf{Action phase}} In this phase, the intervals in $\mathcal I_1$ arrive. 
    \begin{itemize}
        \item{Initialization:} $R_m=\emptyset$. On arrival of an interval $I\in \mathcal I_1$, do the following.
    \item If $I\in S_m$ and $I$ does not intersect any intervals in $R_m$ select $I$ and add $I$ to $R_m$.
    \item Else, if
    $I$ is the last interval to arrive and $R_m=\emptyset$, then select $I$.
    \item Else, if $I\notin S_m$, then do not select $I$.
    \end{itemize}
    \end{itemize}
 Thus, in the observation phase, the algorithm computes $\opt(L_i)$, where $L_i = S_i\cap \mathcal I_0$, for all $i\in {0}\cup[k]$. Then, 
 it sets $m=0$ when $\opt(L_0)>k\cdot\opt(L_i)$ for all $i\in [k]$, else it  sets $m\in[k]$ such that $\opt(L_m)\geq \opt(L_i)$ for all $i\in [k]$. In the action phase, the algorithm simply runs the greedy algorithm $\greedy$ on $R_m := \mathcal I_1 \cap S_m$, ignoring all intervals not in $S_m$. When the algorithm receives the very last interval and $R_m = \emptyset$, it picks the last interval. This ensures that the algorithm outputs an independent set of size at least one whenever $n
    \geq 2$.

\textbf{Analysis.} We now analyze the competitive ratio of the above algorithm.
\begin{align}
\text{We may assume that }\opt(\mathcal I) \geq 10^4 k\log k. \label{eq:assumption}
\end{align} Otherwise, since the algorithm is guaranteed to pick at least one interval (for $n\geq 2$), we get a  competitive ratio of $10^4k \log k$.

We now define an event $E$ and show that (I) if $E$ holds, then the algorithm achieves a competitive ratio of $O(D+K)$, and (II) $E$ holds with probability $1-o_k(1)$.
In order to define the event $E$, we first define $L_i = \mathcal I_0 \cap S_i$ and $R_i = \mathcal I_1 \cap S_i$ for all $i\in \{0\}\cup [k]$, as in the algorithm above. Furthermore, let us define $B=\{i\in \{0\}\cup [k] \mid \opt(S_i) \geq 10^3\log k\}$.

We say that the event $E$ holds iff for all $i\in B$, \begin{align}\min(\opt(L_i),\opt(R_i))\geq (1-\delta)\frac {\opt(S_i)} 2,\quad\text{where $\delta = \frac 1{10}.$}\label{eq1} \end{align}

\noindent\textbf{(I) Assuming $E$ holds, the algorithm is $O(D+k)$-competitive.}

We prove two basic bounds that will help us in the proof. First, we want to show that $\sum_{i\in B} \opt(S_i)$ is large, i.e., at least a constant fraction of $\opt(\mathcal I)$. Intuitively, this would imply that even if an algorithm ignored intervals in $\cup_{i\notin B} S_i$, it would have substantial value available.
\begin{align}
     \opt(\mathcal I)\leq \sum_{i\in B} \opt(S_i) + \sum_{i\notin B} \opt(S_i) 
    \implies \frac{\opt(\mathcal I)}2 \leq\sum_{i\in B} \opt(S_i),\label{eq2}
\end{align}

where the last inequality holds as $\sum_{i\notin B} \opt(S_i) \leq (k+1)\cdot10^3\log k\leq 5\cdot10^3k\log k\leq \frac{\opt(\mathcal I)}2$.

Next, we want to show for $i\in B$, $\opt(R_i)$ is at least a constant fraction of $\opt(L_i)$. 
\begin{align}\text{For each } i\in B, \ \opt(R_i) \geq \frac{1-\delta}2\opt(S_i)\geq \frac{1-\delta}2 \opt(L_i),\label{eq3}\end{align}
where the last inequality holds as $L_i \subseteq S_i$. We now prove the following claim.

\begin{claim}
    Let $m$ be the index chosen at the end of the observation phase, then $m\in B$, 
\end{claim}
\begin{proof}[Proof of claim.]
  We assume $m\notin B$ and derive a contradiction. From Bounds \eqref{eq1} and \eqref{eq2}:

\begin{align}
\sum_{i\in B}\opt(L_i)\ge \frac{1-\delta}{4}\opt(\II)\ge \frac{(1-\delta)\cdot 10^4}{4} k\log k=\frac{9}{4} \cdot 10^3\cdot k\log k.\label{inequality:li2}
\end{align}
Whereas, if $m\not\in B$, then 
\begin{align}
    \sum_{i\in B}\opt(L_i)\leq\sum_{i=0}^k \opt(L_i) \leq k\cdot 10^3\log k + k\cdot 10^3\log k = 2\cdot 10^3\cdot k\log k\label{inequality:li} 
\end{align}
The last inequality follows from the facts that  for all $i\in[k]$, $\opt(L_i)\leq\opt(L_m)$, $\opt(L_0)\leq k\cdot\opt(L_m)$, and as $m\notin B$ we have $\opt(L_m) \leq \opt(S_m)\leq 10^3\log k$. Thus, from inequalities \eqref{inequality:li2} and \eqref{inequality:li}, we get a contradiction. Thus, $m\in B$.
\end{proof}

We now have two cases depending on whether the algorithm picks $m = 0$ or $m \neq 0$ at the end of the observation phase.
In either case, we show $|\greedy(R_m)|\geq \frac 1{O(D+k)}\opt(\mathcal I)$.

\noindent\textbf{Case: $m=0$.}
\begin{align*}
    \frac{\opt(\mathcal I)}2&\leq \sum_{i\in B} \opt(S_i)
    \leq  \frac 2{1-\delta} \sum_{i\in B}\opt(L_i)
    \\
    &\leq \frac 2{1-\delta}\left(\opt(L_0) + \sum_{i\in B\setminus\{0\}}\opt(L_i)\right)\\
    &\leq \frac 2{1-\delta}\left(\opt(L_0) + k\cdot \frac 1 k \cdot \opt(L_0)\right),
\end{align*}
where the first and second inequalities follow from Bounds~\eqref{eq2}, \eqref{eq1}, the last inequality follows from the fact that the algorithm picks $m=0$ when $\opt(L_0) \geq k\cdot\opt(L_i)$ for all $i\in[k]$. Thus,
\begin{align*}
    \opt(L_0) &\geq \frac{1-\delta}8 \opt(\mathcal I)
    \implies \opt(R_0) \geq \frac{(1-\delta)^2}{16}\opt(\mathcal I)\quad\text{(from Bound~\eqref{eq3}).}
\end{align*}

\begin{align}\text{Now from Lemma~\ref{lem:greedyCompetitiveRatioprelim}, the output size }
|\greedy(R_0)|\geq \frac 1{3D} \opt(R_0)\geq \frac{(1-\delta)^2}{48D}\opt(\mathcal I). \label{eq:first}
\end{align}

\noindent\textbf{Case: $m\neq 0$.}
\begin{align*}
    \frac{\opt(\mathcal I)}2
    &\leq \frac 2{1-\delta}\left(\opt(L_0) + \sum_{i\in B\setminus\{0\}}\opt(L_i)\right)
    \leq \frac 2{1-\delta}\left( k \cdot\opt(L_m) + k \cdot \opt(L_m) \right),
\end{align*}
where the first inequality follows the same way as in case $m=0$ above; the last follows from the fact that $\opt(L_0)\leq k\cdot \opt(L_m)$ as $m\neq 0$ and $\opt(L_i)\leq \opt(L_m)$ for all $i\in[k]$.
Thus,
\begin{align*}
    \opt(L_m) &\geq \frac{1-\delta}{8k} \opt(\mathcal I)
    \implies \opt(R_m) \geq \frac{(1-\delta)^2}{16k}\opt(\mathcal I)\quad\text{(from~ Bound \eqref{eq3}).}
\end{align*}

\begin{align}\text{Now from Lemma~\ref{lem:greedyBounded}, the output size }
    |\greedy(R_m)|\geq \frac 1{4} \opt(R_m)\geq \frac{(1-\delta)^2}{64k}\opt(\mathcal I). \label{eq:second}
\end{align}

\begin{align}\text{Thus, in either case, from Bounds \eqref{eq:first} and \eqref{eq:second}, } |\greedy(R_m)|\geq \frac{(1-\delta)^2}{64(D+k)}\opt(\mathcal I).\label{eq:result}\end{align}

Thus, from Bounds \eqref{eq:assumption} and \eqref{eq:result}, our algorithm is $O(\max(k\log k, D+k))$-competitive, which implies that it is $O(D+k\log k)$-competitive.
We now show that (II) is true.

\noindent\textbf{(II) $E$ holds with probability at least $1-3/k$.}

Fix an $i\in B$. 
In the following, we use the fact that $\mathcal I_0$ is a random subset of $\mathcal I$ of size $\lceil\frac n 2\rceil$, and upper bound the probability that only a small fraction of $\OPT(S_i)$ is picked in $\mathcal I_0$.

\begin{align*}
\pr&\left[\opt(L_i) < (1-\delta)\frac{\opt(S_i)} 2\right] \leq  \pr\left[\opt(L_i) \leq (1-\delta)\frac{\opt(S_i)} n \left\lceil \frac n 2 \right\rceil\right] \\
&\leq \exp\left(- \frac {\delta^2}2 \frac{\opt(S_i)}n \left\lceil \frac n 2\right\rceil\right) 
\leq \exp\left(- \frac {\delta^2}4 \opt(S_i)\right)
\leq \exp\left(- \frac {10^3\delta^2}4 \log k\right)
\leq \frac 1 {k^2},
\end{align*}
where the second inequality follows from Lemma~\ref{lem:hoeffding} where $N, M, P,$ and $n$ in the Lemma statement are set to $n, \opt(S_i), \frac{\opt(S_i)}n,$ and $\lceil \frac n 2 \rceil$, respectively, and the fourth inequality follows since $\opt(S_i)\geq 10^3 \log k$, as $i\in B$. 

A similar calculation, where we use the fact that $\mathcal I_1$ is a uniformly random subset of $\mathcal I$ of
\begin{align*}
\text{size $n-\lceil \frac n 2 \rceil$, yields for all $n \geq 100$, }\quad
\pr\left[\opt(R_i) < (1-\delta)\frac{\opt(S_i)} 2\right] 
&\leq \frac 1 {k^2}.
\end{align*}

Now, from the union bound,
$\pr[\exists i\in B \text{ such that }\min(\opt(L_i),\opt(R_i)) < (1-\delta) \frac{\opt(S_i)}2] \leq \frac{2|B|}{k^2} \leq \frac {2k+2}{k^2}\leq \frac 3 k$, for all $k\geq 2$. Thus, $\pr[E] \geq 1 - \frac 3 k$.
This completes the proof.
\end{proof}

Now, when the intervals have integral endpoints in $[0,K]$, where $K=n^{O(1)}$, then $D=0$.
Plugging these values in the statement of \Cref{lem:bounded}, we obtain
\Cref{thm:interBounded}. 

\section{Generalization to Hyperrectangles and Other Objects}
\label{sec:HyperrectGeneralize}
In this section, we first prove \Cref{thm:HyperrecBounded}. We extend our result obtained for intervals to the maximum independent set of $(K,D)$-bounded set of hyperrectangles (\Cref{def:boundedHyperrectangles} in \Cref{sec:boundedExtensionProof}). The input instance $\mathcal H$ consists of hyperrectangles that are in the bounding box $[0,K]^d$, and for each $i\in [K]$ and $j\in [d]$ the number of hyperrectangles in $\mathcal H$ whose projections along the $j^\text{th}$ axis falls in the interval $(j,j+1)$ is at most $D$. We prove the following general lemma.

\begin{restatable}{lemma}{generalBounded}
\label{lem:generalBounded}
    For each $d\in \mathbb N$, there exist constants $\lambda\in(0,1)$, $\mu\geq 1$, and an online algorithm such that for each $(K,D)$-bounded set of $d$-dimensional hyperrectangles $\mathcal H$ where $K$ and $|\mathcal H|$ are large enough, the algorithm outputs an independent set from $\mathcal H$ of size at least $\frac \lambda {D^2+(\log K)^d\cdot \log\log K}\cdot\opt(\mathcal H)$ with probability at least $1-\frac{\mu}{ \log K}$.
\end{restatable}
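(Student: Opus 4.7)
The approach mirrors \Cref{lem:bounded}, lifted to $d$ dimensions, with the crucial modification that during the observation phase we can no longer compute $\opt$ exactly (independent set of rectangles is already NP-hard for $d=2$), so we must work throughout with greedy proxies. Fix $\eps=1$, let $k_0 := \lceil \log_{1+\eps} K\rceil + 1$, and bucket each $H \in \HH$ along every dimension $j \in [d]$: bucket $0$ if $l_j(H) \le 1$, else bucket $b \in [k_0-1]$ if $l_j(H) \in ((1+\eps)^{b-1},(1+\eps)^b]$. The product over dimensions yields $k := k_0^d = O((\log K)^d)$ classes $\{S_\mathbf{b}\}$. Call a class \emph{large} if $\mathbf{b} \in [k_0-1]^d$ (no small dimension) and \emph{small} otherwise. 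The algorithm follows the interval template: split the stream into $\HH_0$ (the first $\lceil n/2\rceil$ arrivals) and $\HH_1$; during the observation phase, for each arriving $H$ identify its class $\mathbf{b}$, append it to $L_\mathbf{b}$, and maintain $G_\mathbf{b} := \greedy(L_\mathbf{b})$ online. At the end of the observation phase, let $\mathbf{m}_{\text{lg}}$ maximize $|G_\mathbf{b}|$ over large $\mathbf{b}$ and $\mathbf{m}_{\text{sm}}$ maximize it over small $\mathbf{b}$, and pick $\mathbf{m}=\mathbf{m}_{\text{sm}}$ if $|G_{\mathbf{m}_{\text{sm}}}| > k\cdot|G_{\mathbf{m}_{\text{lg}}}|$, else $\mathbf{m}=\mathbf{m}_{\text{lg}}$. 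In the action phase, run $\greedy$ on the arrivals of class $\mathbf{m}$ only, with the same fallback that forces at least one object to be output.

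\textbf{Estimate quality and concentration.} The two lemmas from \Cref{sec:prelim} pin down the approximation quality of $\greedy$ inside a class: for a large class every side length agrees up to factor $2 = 1+\eps$ in every dimension, so \Cref{lem:greedyBounded} gives $|\greedy(T)| \ge \opt(T)/4^d$ for any $T\subseteq S_\mathbf{b}$; for a small class some dimension $j$ satisfies $l_j(H) \le 1$ for all $H$ in the class, so by $(K,D)$-boundedness and \Cref{lem:greedyCompetitiveRatioprelim}, $|\greedy(T)| \ge \opt(T)/(3D)$. Hence on each class the quantity $|G_\mathbf{b}|$ estimates $\opt(L_\mathbf{b})$ up to a multiplicative factor $4^d$ (large) or $3D$ (small). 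Exactly as in the interval case we may assume $\opt(\HH) \ge 10^4 k \log k$ (else the trivial output already witnesses the claim), set $B := \{\mathbf{b} : \opt(S_\mathbf{b}) \ge 10^3 \log k\}$, and apply the hypergeometric concentration of \Cref{lem:hoeffding} to every $\mathbf{b} \in B$: with probability at least $1-\mu/\log K$ (for a suitable constant $\mu$, using $|B|\le k = O((\log K)^d)$ and a union bound), both $\opt(L_\mathbf{b})$ and $\opt(R_\mathbf{b})$ lie in $[(1-\delta)\opt(S_\mathbf{b})/2,\opt(S_\mathbf{b})]$ with $\delta=1/10$. The tail $\sum_{\mathbf{b}\notin B}\opt(S_\mathbf{b}) \le k \cdot 10^3 \log k \le \opt(\HH)/2$ ensures $\sum_{\mathbf{b}\in B}\opt(S_\mathbf{b}) \ge \opt(\HH)/2$.

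\textbf{Case analysis and the $D^2$ term.} Conditioning on the concentration event, consider the two cases mirroring \Cref{lem:bounded}. If $\mathbf{m}=\mathbf{m}_{\text{lg}}$, then the estimates on large classes are tight to within $4^d=O_d(1)$, the threshold rule ensures the total contribution of all small classes is at most $k\cdot|G_{\mathbf{m}_{\text{lg}}}|$, and a computation identical to the interval one yields $\opt(S_\mathbf{m}) \ge \opt(\HH)/O(k)$; applying \Cref{lem:greedyBounded} once more in the action phase gives $|\greedy(R_\mathbf{m})| \ge \opt(\HH)/O(k)$. If instead $\mathbf{m}=\mathbf{m}_{\text{sm}}$, the factor-$D$ slack of \Cref{lem:greedyCompetitiveRatioprelim} is charged twice: once when selecting the class from the noisy $|G_\mathbf{b}|$ values (so $\opt(S_\mathbf{m}) \gtrsim \opt(\HH)/D$) and once again when $\greedy$ acts on $R_\mathbf{m}$ (losing another factor $D$), yielding $|\greedy(R_\mathbf{m})| \ge \opt(\HH)/O(D^2)$. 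Combining the two cases with the $\opt(\HH) \ge 10^4 k\log k$ hypothesis (where $\log k = O(d \log\log K)$) gives the claimed $\lambda/(D^2 + (\log K)^d \log\log K)$ bound.

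\textbf{Main obstacle.} The core new difficulty, and the reason the $D$ bound for intervals degrades to $D^2$ here, is that the greedy proxy $|G_\mathbf{b}|$ replaces the exact $\opt(L_\mathbf{b})$ used in \Cref{lem:bounded}. Because this proxy is only $D$-accurate on small classes, its slack multiplies the $D$-slack that already appears in the action phase. A secondary bookkeeping issue is to set the tie-breaking threshold between $\mathbf{m}_{\text{sm}}$ and $\mathbf{m}_{\text{lg}}$ and the cut-off $10^3\log k$ defining $B$ so that, after the two factors of $D$ are absorbed, the final bound is additive in $D^2$ and $(\log K)^d\log\log K$ rather than multiplicative; this is achieved by routing the small-dominant regime entirely through \Cref{lem:greedyCompetitiveRatioprelim} and the large-dominant regime entirely through \Cref{lem:greedyBounded}, exactly as in the two cases treated above.
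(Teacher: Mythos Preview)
Your overall architecture is right and matches the paper: observe/act split, greedy proxies in place of $\opt(L_i)$, hypergeometric concentration over the ``big'' classes, and a two-case analysis. However, the specific partition and threshold you chose do \emph{not} yield the additive bound $D^2 + (\log K)^d\log\log K$ claimed in the lemma; they only give a coarser $(\log K)^{O(d)}$ bound.

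The concrete gap is in the small-class accounting. Your partition produces $N_{\mathrm{sm}} = k_0^d - (k_0-1)^d \approx d(\log K)^{d-1}$ small classes, not $O_d(1)$. In the case $\mathbf m=\mathbf m_{\mathrm{sm}}$ you assert $\opt(S_{\mathbf m})\gtrsim \opt(\HH)/D$, but the calculation you sketch actually gives
\[
\opt(\HH)\;\lesssim\;\sum_{\text{small }\mathbf b}\opt(L_{\mathbf b})+\sum_{\text{large }\mathbf b}\opt(L_{\mathbf b})\;\le\;3D\,N_{\mathrm{sm}}\,|G_{\mathbf m}|+4^d|G_{\mathbf m}|,
\]
so only $\opt(S_{\mathbf m})\gtrsim \opt(\HH)/(D\,N_{\mathrm{sm}})$, and after the second $3D$ loss in the action phase you get $D^2 N_{\mathrm{sm}}$, not $D^2$. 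Likewise in the large case, your threshold $|G_{\mathbf m_{\mathrm{sm}}}|\le k|G_{\mathbf m_{\mathrm{lg}}}|$ bounds only the \emph{maximum} small-class proxy, not the total; summing over $N_{\mathrm{sm}}$ small classes and converting proxies to $\opt$ via the factor $3D$ leaves a residual $D\cdot N_{\mathrm{sm}}\cdot k$ term that does not collapse to $O(k)$.

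The paper fixes both points simultaneously: it keeps only $d$ small classes $S_x$ (one for each dimension, with $S_x$ collecting all $H$ whose \emph{first} short side is along axis $x$), so $N_{\mathrm{sm}}=d=O_d(1)$; and it uses the $D$-aware threshold $\hat L_{m_1}>\tfrac{(k+1)^d}{D}\,\hat L_{m_2}$ rather than your $k$. With this threshold, in the large case the factor $3D$ from \Cref{lem:greedyCompetitiveRatioprelim} cancels against the $1/D$ in the threshold, giving a clean $O_d(k^d)$ bound, while in the small case one gets $O_d(D)$ before the action-phase greedy and hence $O_d(D^2)$ after. Your ``routing'' remark at the end correctly anticipates that some such balancing is needed, but the explicit choices you made do not achieve it.
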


The full proof of~\Cref{lem:generalBounded} is in~\Cref{sec:boundedExtensionProof}. The proof is similar to that of~\cref{lem:bounded}, with the algorithm having observation/action phases, with three notable differences:

\begin{itemize}
\item{\textbf{Partitioning:}} We partition the input instance $\mathcal H$ into $d$ + $(\log K)^d$ classes: $S_x$'s for $x\in [d]$ and $S_y$'s for $y\in [\log K]^d$, based on the side lengths of the hyperrectangles in $\mathcal H$. Each $S_x$ has hyperrectangles $H$ such that $l_x(H) \leq 1$
($\greedy$ is $O(D)$-competitive for such instances). Each $S_y$ has hyperrectangles that have similar lengths in each dimension ($\greedy$ is $4^d$-competitive for such instances). Like in the case of intervals, $S_i=L_i\uplus R_i$, where $L_i$ are hyperrectangles in $S_i$ that arrive during the observation phase.
\item{\textbf{Estimating:}} Calculating $\opt(L_i)$ is hard; computing a maximum independent set is NP-complete even for unit squares \cite{fowler1981optimal}. Thus, we estimate their sizes with $\greedy$: $\hat L_i = \greedy(L_i)$.
\item{\textbf{Balancing:}} Similar to the interval case, we shall select only a particular size class $m$ based on the estimates of $\opt$ values seen during the observation phase. Due to the difference in the competitiveness achieved by $\greedy$ on $S_x$'s and $S_y$'s, we need to balance the estimates while choosing $m$. Let $m_1 = \argmax_x \hat L_x$ and $m_2 = \argmax_y \hat L_y$.
  If $\hat L_{m_1} \geq \frac{(k+1)^d}{D} \hat L_{m_2}$ set $m=m_1$, else set $m=m_2$. As before the algorithm outputs $\greedy(R_m)$.
\end{itemize}

For the special case of $K=n^{(\log n)^{O(1)}}$ and the input hyperrectangles having integer coordinates, i.e., $D=0$, \Cref{lem:generalBounded} implies \Cref{thm:HyperrecBounded}.

\subsection{Extension to Ellipses and Fat Objects} 
In this section, we prove \Cref{thm:fatBounded} for $\sigma$-rectangular objects.
First, we define $\sigma$-rectangular objects. 
\begin{definition}
\label{defn:fat}
    An object $F$ is said to be $\sigma$-rectangular $(\sigma>1)$ if there exist axis-aligned inscribing and circumscribing hyperrectangles of $F$: $\In(F)$ and $\Out(F)$, respectively, such that for all $j\in[d]$, $l_j(\In(F))\geq  l_j(\Out(F))/\sigma$.     
\end{definition}

\begin{figure}
\centering
\subfloat[]{\includegraphics[width=0.25\textwidth]{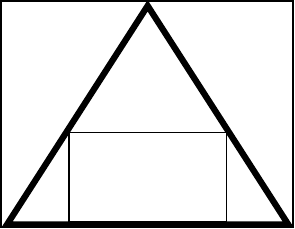}}\qquad
\subfloat[]{\includegraphics[width=0.25\textwidth]{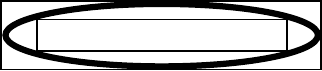}}\qquad
\subfloat[]{\includegraphics[width=0.25\textwidth]{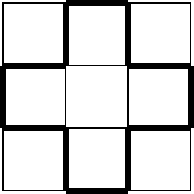}}\qquad
\caption{Values of $\sigma$ for some geometric objects \textbf{(a)} $\sigma=7/4$ for equilateral triangle \textbf{(b)} $\sigma=\sqrt{2}$ for ellipse \textbf{(c)} $\sigma=3$ for star with all boundary edges of equal length}
\label{fig:fat}
\end{figure}

Note that many common geometric objects have small values of $\sigma$ (see \Cref{fig:fat}). We shall assume that the objects of $\FF$ all lie inside the bounding box $[0, K]^d$ and have a length of at least one in each dimension. 
Unlike rectangles, these objects may not preserve their structural properties (fatness, spherical shape, or even convex shapes) under nonuniform scaling. Thus we cannot use our scale preparation steps directly. However, the following lemma asserts that since the objects are $\sigma$-rectangular, replacing them by their circumscribing rectangles does not incur much loss.
To state the lemma we need the following definitions. 
 Recall, an object $F$ is said to be $\sigma$-rectangular $(\sigma>1)$ if there exist axis-aligned inscribing and circumscribing hyperrectangles of $F$: $\In(F)$ and $\Out(F)$, respectively, such that for all $j\in[d]$, $l_j(\In(F))\geq  l_j(\Out(F))/\sigma$.     
Let $\FF$ be a set of $\sigma$-rectangular objects in $d$-dimension. For each object $F\in\FF$, let $\In(F)$ and $\Out(F)$ denote some inscribed and circumscribed axis-aligned hyperrectangles, respectively, with respect to which $F$ is $\sigma$-rectangular. Let $\In(\FF):= \{\In(F)\mid F \in \FF\}$ and $\Out(\FF):= \{\Out(F)\mid F \in \FF\}$.

\begin{lemma}
\label{lem:fatobjects}
    Let $\FF$ be a set of $\sigma$-rectangular objects such that the side lengths of any two hyperrectangles in $\Out(\FF)$ lie within a factor of 2. Then $\opt(\FF)\le O(\sigma^d)\cdot \opt(\Out(\FF))$.
\end{lemma}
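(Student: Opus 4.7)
\textbf{Proof proposal for \Cref{lem:fatobjects}.}
The plan is to start from an optimal independent set $\FF^* \subseteq \FF$ with $|\FF^*| = \opt(\FF)$ and extract, by a volume/degree-counting argument, a large subcollection whose circumscribing hyperrectangles are pairwise disjoint; this subcollection will give the required lower bound on $\opt(\Out(\FF))$. Concretely, I build an auxiliary \emph{conflict graph} $G = (\FF^*, E)$ in which $F_i F_j \in E$ iff $\Out(F_i) \cap \Out(F_j) \neq \emptyset$. The key step is to upper bound the maximum degree of $G$ by $O(\sigma^d)$; once this is in hand, the standard greedy bound yields an independent set in $G$ of size $|\FF^*|/(1 + O(\sigma^d)) = \opt(\FF)/O(\sigma^d)$, whose circumscribing boxes are pairwise disjoint and therefore witness $\opt(\Out(\FF)) \ge \opt(\FF)/O(\sigma^d)$.

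To bound the degree at a fixed $F_i$, set $a_j := l_j(\Out(F_i))$. First, I argue a containment: if $\Out(F_k) \cap \Out(F_i) \neq \emptyset$, then since $l_j(\Out(F_k)) \in [a_j/2,\, 2a_j]$ (side-lengths are within a factor of $2$), the box $\Out(F_k)$ lies inside
\[
B_i := \prod_{j=1}^{d}\bigl[\,x_{i,j}^{1} - 2a_j,\ x_{i,j}^{1} + 3a_j\,\bigr],
\]
so $\mathrm{vol}(B_i) = 5^d \prod_j a_j$. Second, $\sigma$-rectangularity gives $l_j(\In(F_k)) \ge l_j(\Out(F_k))/\sigma \ge a_j/(2\sigma)$, so $\mathrm{vol}(\In(F_k)) \ge \prod_j a_j / (2\sigma)^d$. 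Third, since all $F_k \in \FF^*$ are pairwise disjoint and $\In(F_k) \subseteq F_k$, the inscribed boxes $\In(F_k)$ are pairwise disjoint, and they all lie inside $B_i$. A packing argument then gives
\[
\bigl|\{F_k \in \FF^* : \Out(F_k) \cap \Out(F_i) \neq \emptyset\}\bigr| \;\le\; \frac{\mathrm{vol}(B_i)}{\min_k \mathrm{vol}(\In(F_k))} \;\le\; \frac{5^d \prod_j a_j}{\prod_j a_j/(2\sigma)^d} \;=\; (10\sigma)^d,
\]
which is the desired degree bound.

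With max degree at most $(10\sigma)^d = O(\sigma^d)$, a greedy argument on $G$ produces an independent set $\II \subseteq \FF^*$ of size at least $|\FF^*|/(1 + (10\sigma)^d)$. By construction, $\{\Out(F) : F \in \II\}$ is pairwise disjoint and is a subfamily of $\Out(\FF)$, so
\[
\opt(\Out(\FF)) \;\ge\; |\II| \;\ge\; \frac{\opt(\FF)}{1 + (10\sigma)^d} \;=\; \Omega\!\left(\frac{\opt(\FF)}{\sigma^d}\right),
\]
which rearranges to the statement $\opt(\FF) \le O(\sigma^d)\cdot \opt(\Out(\FF))$.

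The main obstacle I anticipate is making the containment claim $\Out(F_k) \subseteq B_i$ fully rigorous in the presence of the factor-$2$ slack in side lengths (getting the constants straight so that the $5^d$ bound really holds); but this is routine interval arithmetic, so no deeper idea is required beyond the volume-packing principle above.
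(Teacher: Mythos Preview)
Your proposal is correct and follows essentially the same approach as the paper: both prove that in the intersection graph of $\Out(\OPT(\FF))$ every vertex has degree at most $(10\sigma)^d$ via the identical containment-in-a-$5^d$-box plus volume-packing-of-disjoint-$\In$-boxes argument, and then extract a large independent set. The only cosmetic difference is that you frame the last step explicitly as a conflict graph plus greedy, whereas the paper states the degree bound and leaves that step implicit.
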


\begin{proof}
\begin{figure}[h]
    \centering
    \includegraphics[width=0.5\linewidth]{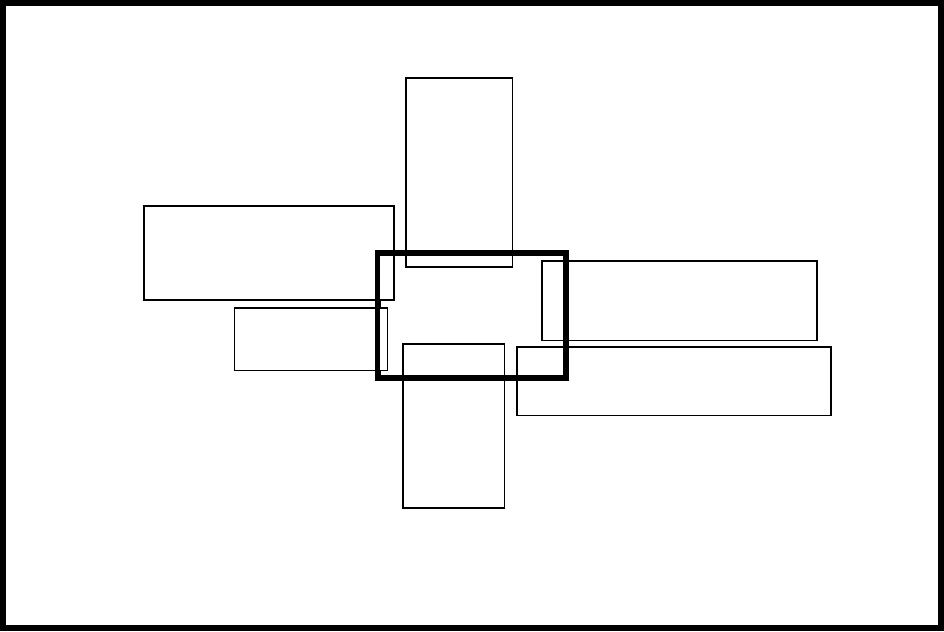}
    \caption{Figure for \Cref{lem:fatobjects}}
    \label{fig:boxBH}
\end{figure}

Consider the set OPT$(\FF)$. It suffices to show that each hyperrectangle in $\Out(\text{OPT}(\FF))$ overlaps with at most $O(\sigma^d)$ other hyperrectangles from the same set. Consider any hyperrectangle $H\in \Out(\text{OPT}(\FF))$ whose side lengths in the $d$ dimensions are $l_1,\ldots,l_d$, respectively. Let $F'\in \text{OPT}(\FF)$ be any other object such that $\Out(F')$ intersects $R$. Now since the side length of $\Out(F')$ is at most $2l_j$ in the $j^{\text{th}}$ dimension for all $j\in [d]$, the projection of $\Out(F')$ in dimension $j$ must completely lie within an interval of length at most $2\cdot 2l_j+l_j=5l_j$, centered at the midpoint of the $j^{\text{th}}$ side of $H$. Therefore $F'$ and hence $\In(F')$ must completely lie inside a box $B(H)$ whose side lengths in the $d$ dimensions are $5l_1,\ldots,5l_d$, respectively (see \Cref{fig:boxBH}). Also since the side length of $\Out(F')$ in dimension $j$ is at least $l_j/2$, for all $j\in [d]$, and $F'$ is $\sigma$-rectangular, $\In(F')$ has a side length of at least $l_j/(2\sigma)$ in dimension $j$. Since $\In(\text{OPT}(\FF)):= \{\In(F)\mid F\in \text{OPT}(\FF)\}$ is an independent set, by a volume argument, the number of $F'\in \text{OPT}(\FF)$ such that $\Out(F')$ overlaps $H$ is upper bounded by $(10\sigma)^d$. This completes the proof.
\end{proof}

The algorithm now is simple: whenever an object $F$ arrives, we replace it by its circumscribing hyperrectangle $\Out(F)$ and run our algorithm for hyperrectangles. Note that by our assumption, no hyperrectangle has a side of length less than one in any dimension, i.e., $D=0$ in this case. Now, our algorithm outputs an $O((\log K)^d\cdot \log\log K)$-competitive solution to the maximum independent set in $\Out(\FF)$, wherein the side length of the selected hyperrectangles lie within a factor of 2 in each dimension. By \Cref{lem:fatobjects}, this yields an $O((\sigma\log K)^d\cdot \log\log K)$-competitive solution to the maximum independent set of $\FF$. For $K=n^{(\log n)^{O(1)}}$, we thus obtain \Cref{thm:fatBounded}.

\section{Removal of the Assumption on $K$}
\label{sec:KtofnMap}
In this section, we show how to get rid of the dependence on $K$ in the competitive ratio. We shall first consider the case of intervals and prove Theorem~\ref{thm:interUnbound} using Lemma~\ref{lem:bounded}. A key ingredient in our proof will be the following lemma which, roughly speaking, states that if we pick a random set $T\subseteq[n]$ of size $n/2$  uniformly at random, the $n/2 +1$ \textit{gaps} induced by $T$ on $[n]$ are each at most $4\log n$ with high probability.
\begin{restatable}{lemma}{gaps}
\label{lem:gaps}
Let $T$ be a subset of $[n]$ of cardinality $\lceil \frac n 2 \rceil$ chosen uniformly at random. Let 
$T=\{X_i \mid X_i \in [n], i\in \left[\lceil n/2 \rceil\right]\}$, where 
$1=X_0\leq X_1 < X_2 <\cdots < X_{\lceil \frac n 2 \rceil} \leq X_{\lceil\frac n 2\rceil + 1}=n$.  
\begin{align*}\text{Then,}\quad
\pr\left[ \max_{i\in \left[\lceil\frac n 2\rceil+1\right]} (X_{i}- X_{i-1}) \leq 4 \lceil\log_2 n\rceil \right] \geq 1 -\frac 1 {n}.
\end{align*}
\end{restatable}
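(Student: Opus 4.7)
The plan is a union bound over possible long gaps. Set $g := 4 \lceil \log_2 n \rceil$; the goal reduces to showing $\pr[\max_i (X_i - X_{i-1}) > g] \leq 1/n$. The first key step is the reduction: if some gap $X_i - X_{i-1} > g$, then the $g$ consecutive integers $\{X_{i-1}+1, X_{i-1}+2, \ldots, X_{i-1}+g\}$ all lie in $[n]$ and are disjoint from $T$ (the sentinel conventions $X_0 = 1$ and $X_{\lceil n/2 \rceil + 1} = n$ keep this true even at the boundaries, since $1$ being covered by $X_0$ and $n$ by $X_{\lceil n/2 \rceil + 1}$ forces such a run to sit strictly inside $[n]$). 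Denoting $W_a := \{a, a+1, \ldots, a+g-1\}$ for $a \in \{1, \ldots, n - g + 1\}$, this gives
\[
\pr\bigl[\max_i (X_i - X_{i-1}) > g\bigr] \;\leq\; \sum_{a=1}^{n-g+1} \pr[T \cap W_a = \emptyset].
\]

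Next, I would compute the per-window probability directly. Since $T$ is a uniformly random $\lceil n/2 \rceil$-subset of $[n]$, for any fixed $W$ of size $g$,
\[
\pr[T \cap W = \emptyset] \;=\; \frac{\binom{n-g}{\lceil n/2 \rceil}}{\binom{n}{\lceil n/2 \rceil}} \;=\; \prod_{i=0}^{g-1}\frac{n - \lceil n/2 \rceil - i}{n - i}.
\]
A short calculation shows each factor is a decreasing function of $i$ (its derivative equals $-\lceil n/2 \rceil /(n-i)^2 < 0$), so each is at most the $i = 0$ term $(n - \lceil n/2 \rceil)/n \leq 1/2$. Hence the product is at most $2^{-g} \leq 2^{-4 \log_2 n} = n^{-4}$. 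Plugging this into the union bound,
\[
\pr\bigl[\max_i (X_i - X_{i-1}) > g\bigr] \;\leq\; (n - g + 1) \cdot n^{-4} \;\leq\; n^{-3} \;\leq\; 1/n,
\]
which is the desired bound.

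The main point to get right is why the direct combinatorial computation is needed rather than a black-box application of Lemma~\ref{lem:hoeffding}. Using the hypergeometric Chernoff bound with $\delta = 1$ would yield only $\pr[T \cap W = \emptyset] \leq \exp(-g/4) = n^{-1/\ln 2} \approx n^{-1.44}$ per window, after which the union bound over roughly $n$ windows leaves a failure probability of about $n^{-0.44}$, falling short of $1/n$. The telescoping binomial identity above shaves the per-window probability down to exactly $2^{-g}$, which is tight enough to absorb the union bound over $n$ windows. Apart from this sharper tail estimate, the argument is a straightforward reduction plus union bound, and no other subtleties arise.
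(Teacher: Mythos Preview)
Your proof is correct and follows essentially the same approach as the paper: both argue via a union bound over the event that $T$ misses a set of $\Theta(\log n)$ consecutive integers, bounding that miss probability by the exact hypergeometric product $\prod_i \frac{n-\lceil n/2\rceil-i}{n-i}\le 2^{-|\text{set}|}$. The only cosmetic difference is that the paper partitions $[n]$ into \emph{disjoint} blocks of size $2\lceil\log_2 n\rceil$ (so hitting every block forces gaps $\le 4\lceil\log_2 n\rceil$), whereas you take sliding windows of the full size $4\lceil\log_2 n\rceil$; your version trades more events in the union bound for a tighter per-event probability and in fact yields the stronger $n^{-3}$ failure bound.
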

See \Cref{sec:proofLemmaGaps} for the formal proof of this lemma. We briefly explain the intuition behind the proof. Roughly speaking, we divide $[n]$ into contiguous blocks of length $2 \log_2 n$ each. Observe that if $T$ hits all such blocks, the gaps induced by $T$ on $[n]$ are each of length at most $4 \log_2 n$. Now, since each element of $[n]$ is being included in $T$ with probability $1/2$, we can argue that the probability that $T$ does not intersect a fixed block is at most $1/ (2^{2 \log_2 n})=\frac 1{n^2}$. Applying a union bound over all such blocks, which are at most $n$ in number, we have that $T$ must hit all blocks with probability at least $1-\frac 1 n$.

We now prove Theorem~\ref{thm:interUnbound}.

\begin{proof}[Proof of Theorem~\ref{thm:interUnbound}]

Our idea is the following. We will divide the algorithm into two phases, each consisting of about $n/2$ intervals. In the first phase, we will not pick any of the intervals. Based on the intervals that arrived in the first phase, we will do a {\it non-uniform scaling} of the real line, so that the intervals arriving in the second phase when read in this scale will form a $(K,D)$-bounded instance for $K=O(n)$ and $D=O(\log n)$. In the second phase, we will run the algorithm in Lemma~\ref{lem:bounded} to get the desired competitive ratio.

Let $\mathcal I$ be the input instance with $|\mathcal I|$ = $n$.
Let $\mathcal I = \mathcal I_0\uplus\mathcal I_1$, where $I_0$ consists of the intervals received in the first phase, where $|I_0|=\lceil \frac n 2 \rceil$, and $I_1$ consists of the intervals received in the second phase.

\textbf{Non-uniform Scaling.} We first see how to obtain the desired non-uniform scaling.
For an interval $I$, let $\ell(I)$  denote its left endpoint. 
Let $I_1,\cdots, I_n$ be all the intervals in $\mathcal I$ such that $\ell(I_1) \leq \cdots \leq \ell(I_n)$.  Note that the $\ell(I_j)$'s need not be all distinct.  
Observe that the set of subscripts of the intervals in $\mathcal I_0$ is a random subset of $[n]$ of size $\lceil \frac n 2\rceil$ chosen uniformly at random. 
Let $\mathcal I_0 = \{I_{X_1},\cdots,I_{X_{\lceil n/2\rceil}}\}$ where $1=X_0\leq X_1 < \cdots <X_{\lceil n/2\rceil}\leq X_{\lceil n/2\rceil+1}=n$.
Let $\xi_1$ be the event $\max_{i\in\left[\lceil n/2\rceil + 1\right]} (X_i - X_{i-1}) \leq 4 \lceil\log_2 n\rceil$. Lemma~\ref{lem:gaps} yields that $\xi_1$ holds with probability $1-1/n$.
Since $I_j$'s are ordered by their left endpoints, if the event $\xi_1$ holds, then for each $i\in[\lceil n/2\rceil+1]$, there are at most $4\lceil\log_2 n\rceil$ intervals in $\mathcal I$ whose left endpoints are between $\ell(I_{X_{i-1}})$ and $\ell(I_{X_{i}})$. Let $\{p_1,\cdots, p_t\}=\{\ell(I_{X_1}),\cdots,\ell(I_{X_{\lceil n/2\rceil}})\}$ be the set of $t$ left endpoints of the intervals in $\mathcal I_0$, where $t\leq \lceil n /2 \rceil$ (we do not necessarily have an equality as some of the left endpoints might be the same) such that $p_1< \cdots < p_t$. In the second phase, we will not select any interval $I$ whose left endpoint is to the left of $p_1$ or right of $p_t$, i.e., $\ell(I) < p_1$ or $\ell(I) > p_t$; let $F\subseteq \mathcal I$ denote all such intervals that need to be ignored. From the above discussion, if $\xi_1$ holds, then $|F|\leq 8\lceil \log_2 n\rceil$.

\begin{figure} 
    \begin{subfigure}[t]{0.5\textwidth} 
        \centering 
        \includegraphics[width=0.9\textwidth]{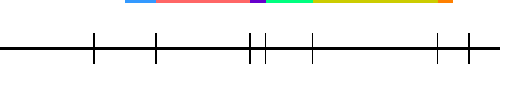} 
        \subcaption{Original interval} 
    \end{subfigure}\hfill 
    \begin{subfigure}[t]{0.5\textwidth} 
        \centering 
        \includegraphics[width=0.9\textwidth]{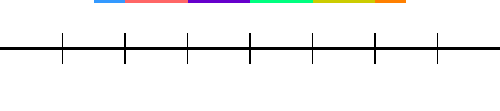} 
        \subcaption{After scaling} 
    \end{subfigure} 
    \caption{Non-uniform scaling. The vertical bars on the axis represent the $t$ starting points of the intervals in $\II_0$. After scaling, they become evenly spaced.}\label{fig:scaling} 
\end{figure}

We now define our new scale $s:[p_1,\infty) \rightarrow [1,t]$ as follows. 

\[
s(x) = 
\begin{cases}
i & \text{if }x = p_i \text{ for some } i\in[t]\\
i + \frac{x-p_i}{p_{i+1}-p_i} & \text{if } p_i < x < p_{i+1} \text{ for some $i\in[t-1]$}\\
t & \text{if } x > p_t
\end{cases}
\]

In the second phase, each interval in $\mathcal I_1\setminus F$, will be read in the scale $s$, i.e., the interval $[a,b]$ will be read as $[s(a),s(b)]$. See \Cref{fig:scaling} for an example. The function $s$ essentially scales the set $[0,K]$ such that the $t$ left endpoints of the intervals in $\mathcal I_0$ become equispaced. \Cref{fig:scaling}(b) illustrates how an interval in $\II_1\setminus F$ looks after the scaling operation. Additionally, intervals in $F$ will be all set to $[0,0]$. Observe that this scaling does not change the underlying intersection graph induced on $\mathcal I_1\setminus F$. Also, notice that  for all $[a,b]\in \mathcal I_1\setminus F$, $[s(a),s(b)]\subseteq[1,t]\subseteq[0,\lceil n/2\rceil]$. For each $i\in\{0\}\cup[t-1]$, let $D_i$ denote the number of intervals (read in the scale $s$ if they are not in $F$ and read as $[0,0]$ if they are in $F$) from $\mathcal I_1$ that are contained in $(i,i+1)$. Let $D= \max_{i\in\{0\}\cup[t-1]} D_i$. If $\xi_1$ holds, then $D \leq 4\lceil \log_2 n\rceil$ since an interval is contained in $(i,i+1)$ only if its left endpoint is contained in $(i,i+1)$. Thus, at the end of the first phase, if $\xi_1$ holds, we will end up with a $(K,D)$-bounded instance ($\mathcal I_1$ when scaled as explained), where $K=\lceil n/2\rceil$ and $D\leq 4\lceil \log_2 n\rceil$, on which we shall then run the algorithm in Lemma~\ref{lem:bounded}.

\textbf{Algorithm.} We now describe our algorithm. Let $\mathsf{Alg}$ denote the algorithm in Lemma~\ref{lem:bounded}.

\begin{itemize}
    \item{\textbf{First phase: Scale preparation.}}
    In this phase, the intervals in $\mathcal I_0$ arrive. Let $p_1 <\cdots < p_t$ be the distinct left endpoints of the intervals in $\mathcal I_1$. Define the scale $s$ as described above. 
    \item{\textbf{Second phase: Outsourcing.}} In this phase, the intervals from $\mathcal I_1$ arrive. We will feed these intervals to the algorithm in Lemma~\ref{lem:bounded}, namely $\mathsf{Alg}$, after appropriately scaling them with $s$. First, we feed $\mathsf{Alg}$ with the size of the instance it will receive: $n-\lceil n/2\rceil$. On receiving the interval $[a,b]\in\mathcal I_1$, 
    \begin{itemize}
        \item If $[a,b]$ is the last interval and no interval has been selected before, select $[a,b]$.
        \item Else, if $[a,b]\in F$, i.e., $a<p_1$ or $a>p_t$, we feed $\mathsf{Alg}$ the interval $[0,0]$, but do not select $[a,b]$ even if $\mathsf{Alg}$ selects $[0,0]$.
        \item Else, feed the interval $[s(a),s(b)]$ to $\mathsf{Alg}$ and select the interval $[a,b]$ iff $[s(a),s(b)]$ is selected by $\mathsf{Alg}$.
    \end{itemize}
\end{itemize}

\textbf{Analysis.}
Note that the intervals selected by the algorithm form an independent set. 
Let $h$ be the size of this independent set.  Observe that, if we condition on $\mathcal I_1$ being a particular $(K=\lceil n/2\rceil,D)$-bounded instance and $F$ being a particular set, then from Lemma~\ref{lem:bounded}, we conclude that for all large $K$ and $n$, the above algorithm will output an independent set of size  $h\geq {\lambda
}\cdot \opt(\mathcal I_1\setminus F)/{(D+\log K\log\log K)} - 1$ with probability at least $1 -  \mu/ {\log K}$, where $\lambda\in(0,1)$ and $\mu\geq 1$ are absolute constants. The $-1$ term appears in the above guarantee because $\mathsf{Alg}$ might select the interval $[0,0]$ fed to it corresponding to the intervals in $F$, but our algorithm ignores the intervals in $F$.

 We assume $\opt(\mathcal I) \geq \frac{10^3}\lambda \log n\cdot\log\log n$. Otherwise, since our algorithm will pick at least one interval, we already get a competitive ratio of $\frac{10^3}\lambda \log n\cdot\log\log n$.

 We have already established that the event $\xi_1$ implies $D\leq 4\lceil \log_2 n\rceil$ and $|F|\leq 8\lceil\log_2 n\rceil$, and $\pr(\xi_1)\geq1-\frac 1 n$. 
 To show our result, the only ingredient missing is a lower bound on $\opt(\II_1)$. To this end, let us define $\xi_2$ to be the event $\opt(\mathcal I_1) \geq \frac{1-\delta} 2\cdot \opt(\mathcal I)$, where $\delta =\frac 1{10}$. 
Now a similar calculation as in the proof of Lemma~\ref{lem:bounded} (using Lemma~\ref{lem:hoeffding}), yields for all $n\geq 100$, $\pr(\xi_2)\geq 1 - \frac 1{n^2}$.
Using union bound, the probability that both $\xi_1$ and $\xi_2$ hold is 
\[\pr(\xi_1 \wedge \xi_2)\geq 1 - \left(\frac 1 n + \frac 1{n^2}\right)\geq 1 - \frac 2 n.\]

$\xi_1\wedge\xi_2$ implies that $\mathcal I_1$ is a $(K=\lceil n/2\rceil, D = 4\lceil\log_2 n\rceil)$-bounded instance, and $|F| \leq 8\lceil \log_2 n\rceil$, $\opt(\mathcal I_1)\geq \frac{9}{20}\opt(\mathcal I)$.
Assuming these ranges for $K,D,|F|,$ and $ \opt(\mathcal I_1)$, we lower bound the expression ${\lambda}\cdot \opt(\mathcal I_1\setminus F)/{(D+\log K\log\log K)} - 1$.

 \begin{align*}
\frac{{\lambda}\cdot\opt(\mathcal I_1\setminus F)}{{(D+\log K\log\log K)}} - 1 
     &\geq \frac{{\lambda}\cdot \left(\frac 9{20}\cdot\opt(\mathcal I) - 8\lceil\log_2 n\rceil\right)}{{(4\lceil  \log_2 n\rceil+\log \lceil \frac n 2\rceil \cdot \log\log \lceil \frac n 2\rceil)}} - 1\\
&\geq\frac{ \lambda \cdot \frac 8{20}\cdot \opt(\mathcal I)}{10\log n\cdot \log\log n} - 1 
     \geq \frac{ 7\lambda \cdot\opt(\mathcal I)}{200\log n\cdot \log\log n},
 \end{align*}
where the second inequality holds since $\opt(\mathcal I)\geq 160\lceil \log_2 n\rceil$ and the last inequality holds since $ \lambda\cdot \opt(\mathcal I)\geq 200\log n\cdot \log \log n$.

Now, we show that the output of our algorithm $h$ is at least this lower bound with high probability. 

\begin{align*}
    \pr\left[h\geq \frac{ 7\lambda \cdot\opt(\mathcal I)}{200\log n\cdot \log\log n}\right] &\geq\pr(\xi_1\wedge\xi_2)\cdot \pr\left[h\geq \frac{ 7\lambda \cdot\opt(\mathcal I)}{200\log n\cdot \log\log n}\mid \xi_1\wedge\xi_2\right]\\
    &\geq \left(1 -\frac 2 n\right)\left(1 - \frac \mu{\log\lceil n/2\rceil}\right) \geq 1 - \frac{\mu+3}{\log n}.
\end{align*}\end{proof}
The algorithm can be implemented in $O(n\log n)$ time using BST (see \Cref{sec:datastructuresforInterval}).

\subsection{Extension to Hyperrectangles}
We now consider the case of hyperrectangles and prove \Cref{thm:hyperUnbound}.
As in the case of intervals, we shall have two phases, the scale preparation and outsourcing phases each comprising $n/2$ intervals each. After all hyperrectangles in the scale preparation phase have arrived, we perform non-uniform scaling in each dimension $j \in [d]$ independently. \Cref{lem:gaps} then implies that
the scaled hyperrectangles form a $(K,D)$-bounded set with $K=O(n)$ and $D=O(\log n)$, with probability at least $1-d/n$ (by taking union bound over all dimensions). \Cref{lem:generalBounded} then yields the promised competitive ratio guarantee in \Cref{thm:hyperUnbound}. The implementation details for achieving a running time of $\Tilde{O}_d(n)$ are deferred to \Cref{sec:datastructuresforRect}.


\section{Conclusion}\label{sec:conclusion}
We provide a guarantee of polylogarithmic strongly competitive ratio for the independent set in geometric intersection graphs for a wide range of objects. Note that we did not try to optimize the constants in the competitive ratio. With a refined analysis, the constants can be improved.  Apart from being the only known algorithm with a sublinear strongly competitive ratio, due to near-linear runtime, our algorithm also provides a simple candidate algorithm to be used in practice.


\section*{Acknowledgement}
The authors wish to thank Jaikumar Radhakrishnan for generously sharing his insights on hypergeometric concentration bounds, Rahul Saladi for an elegant proof of \cref{lem:DSy}, and K.V.N. Sreenivas for some helpful discussions.
\bibliographystyle{plain}
\bibliography{ref}

\appendix

\section{Proofs of Propositions}\label{sec:proofsOfProps}

\begin{proposition}\label{prop:sizeObliviousLowerBound}
    For all $\eps>0$, no online algorithm for interval scheduling in the random-order model which is not provided $n$, the size of the input instance, can be strongly  $O(n^{1-\eps})$-competitive.
\end{proposition}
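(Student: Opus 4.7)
I plan to argue by contradiction. Suppose a size-oblivious online algorithm $A$ is strongly $C n^{1-\eps}$-competitive for some constant $C$. I will construct two families of inputs on which the requirements on $A$ are mutually inconsistent. Define $\mathcal{I}_A(M)$ to be $M$ identical copies of $[0,1]$, so that $\opt = 1$, and define $\mathcal{I}_B(N)$ to consist of the $N$ pairwise-disjoint short intervals $\{[(i-1)/N,\, i/N]\}_{i=1}^{N}$ together with $M' := \lceil N^{1+\eps/2} \rceil$ further copies of $[0,1]$; then $|\mathcal{I}_B(N)| = \Theta(N^{1+\eps/2})$ and $\opt(\mathcal{I}_B(N)) = N$.

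Strong competitiveness then forces opposing behaviors on the two families. On $\mathcal{I}_A(M)$, since $\opt = 1$, for large $M$ the algorithm must pick at least one interval with probability $1 - o_M(1)$. On $\mathcal{I}_B(N)$, the guarantee yields $|A| \ge N/(C\,|\mathcal{I}_B(N)|^{1-\eps}) = \Omega(N^{\eps/2+\eps^2/2}) \to \infty$ with probability $1 - o_N(1)$; but every copy of $[0,1]$ overlaps every short and every other copy of $[0,1]$, so a feasible output of size at least two cannot include any copy of $[0,1]$. Hence, with probability $1 - o_N(1)$, $A$ picks zero copies of $[0,1]$ in $\mathcal{I}_B(N)$.

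The main quantity to track is $q_t$, the probability that $A$, whose observed history so far consists of exactly $t-1$ consecutive copies of $[0,1]$ with no picks, picks the $t$-th arriving interval (which is again $[0,1]$). Because $A$ is size-oblivious, $q_t$ is a function of $A$ and $t$ only, independent of the instance size. On $\mathcal{I}_A(M)$ this history is the only one that ever arises, so the probability that $A$ picks something equals $1 - \prod_{t \le M}(1-q_t)$; strong competitiveness forces this to tend to $1$, so $\sum_t q_t = \infty$ and there exists a finite $\tau$ with $\prod_{t \le \tau}(1-q_t) \le 1/2$. On $\mathcal{I}_B(N)$, let $E_\tau$ be the event that the first $\tau$ arrivals are all copies of $[0,1]$; a direct computation gives $\pr(E_\tau) \ge \bigl(1 - 1/(N^{\eps/2}+1)\bigr)^\tau$, which tends to $1$ as $N \to \infty$ with $\tau$ fixed. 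Conditional on $E_\tau$, the observed history of $A$ during the first $\tau$ steps has the same distribution as its history during the first $\tau$ steps of $\mathcal{I}_A$, so $A$ picks some copy of $[0,1]$ within those steps with conditional probability $1 - \prod_{t \le \tau}(1-q_t) \ge 1/2$.

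Combining these, the unconditional probability that $A$ picks some copy of $[0,1]$ in $\mathcal{I}_B(N)$ is at least $\pr(E_\tau)\cdot\bigl(1 - \prod_{t \le \tau}(1-q_t)\bigr) \ge \pr(E_\tau)/2$, which stays bounded below by a positive constant as $N \to \infty$; this contradicts the requirement from the second paragraph. The step I expect to be most delicate is rigorously setting up the distributional coupling used to transfer $A$'s behavior on $\mathcal{I}_A$ onto $\mathcal{I}_B$ conditioned on $E_\tau$: it relies essentially on $A$ being oblivious to $n$, so its decision at each step is a function only of its observed prefix (which matches in distribution under the conditioning) and its own internal coin tosses, not of the instance size.
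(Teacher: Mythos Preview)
Your proof is correct and follows essentially the same approach as the paper's: both construct an ``all copies of one long interval'' instance (your $\mathcal{I}_A$, the paper's $\mathcal{I}_1$) to force the size-oblivious algorithm to commit early, and a ``mostly long plus a few disjoint short'' instance (your $\mathcal{I}_B$, the paper's $\mathcal{I}_2$) on which early commitment is fatal, then exploit that the algorithm cannot distinguish the two during an identical prefix. The only differences are cosmetic: the paper takes its prefix length directly as the $n_0$ from the $1-o(1)$ guarantee, whereas you extract it as a $\tau$ via the explicit product $\prod_{t\le\tau}(1-q_t)\le 1/2$, and your parameterization of the mixed instance ($N$ short, $N^{1+\eps/2}$ long) differs from the paper's ($2cn^{1-\eps}$ short out of $n$), but the mechanics and the resulting contradiction are the same.
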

\begin{proof}
We fix an online algorithm $\alg$ for interval scheduling which is strongly $c n^{1-\epsilon}$-competitive for some constant $c\in(0,1)$, and derive a contradiction. From the definition of strongly competitive, for all instances $\mathcal I$, 
$$\Pr\left[|\alg (\mathcal I)|\geq \frac {\opt(\mathcal I)}{c\cdot|\mathcal I|^{1-\epsilon}}\right] \geq 1-o(1).$$ 
Therefore, there exists $n_0$ such that for all instances $\mathcal I$ such that $|\mathcal I| \geq n_0$, 
$$\Pr\left[|\alg (\mathcal I)|\geq \frac {\opt(\mathcal I)}{c\cdot|\mathcal I|^{1-\epsilon}}\right] \geq \frac 1 2.$$ 
Fix such an $n_0$. Thus, for all instances $\mathcal I$, if $|\mathcal I|\geq n_0$ and  $\opt(\mathcal I) >0$, then $|\alg(\mathcal I)|>0$ with probability at least $\frac 1 2$. We will use this fact crucially to derive our contradiction. 

We create two instances $\mathcal I_1$ and $\mathcal I_2$.
Consider  some $2cn^{1-\epsilon}$ disjoint  intervals ($n >> n_0$), which we refer to as ${\it short}$ intervals, and a ${\it long}$ interval that intersects all the $2cn^{1-\eps}$ short intervals. Our instance $\mathcal I_1$ will consist of $n_0$ copies of the long interval, and $\mathcal I_2$ consists of all short intervals and $n-2cn^{1-\epsilon}$ copies of the long interval. Thus, $|\mathcal I_1|=n_0$ and $|\mathcal I_2|=n$, and $\opt(\mathcal I_1)=1$ and $\opt(\mathcal I_2)=2cn^{1-\epsilon}$. Furthermore, if $\alg$ on $\mathcal I_2$ picks a long interval, then, it cannot pick any other intervals resulting in $\alg(\mathcal I_2)=1$. We argue that $\alg$ on $\mathcal I_2$ will pick a long interval during the first $n_0$ intervals revealed to it with constant probability, giving a competitive ratio worse than $c n^{1-\epsilon}$ with constant probability. With constant probability $\mathcal I_1$ and $\mathcal I_2$ are indistinguishable for the first $n_0$ steps; this probability is 
$$ \prod_{i=0}^{n_0-1} \frac{n-i-2cn^{1-\epsilon}}{n-i}=(1-o(1))^{n_0}=1-o(1).$$
Now, from the observation above, $\alg$ on $\mathcal I_1$ picks an interval with probability at least $\frac 1 2$. Thus, $\alg$ on $\mathcal I_2$ picks an interval in the first $n_0$ steps with probability at least $\frac 1 2 (1-o(1)) \geq \frac{49}{100}$.
    
\end{proof}

\begin{proposition}\label{prop:observationLowerBound}
    No online algorithm for interval scheduling in the random-order model that does not pick any of the initial $c n$ intervals (for some absolute constant $c\in (0,1)$) can be strongly $O(1)$-competitive.
\end{proposition}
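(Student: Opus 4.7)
My approach is to fix an arbitrary algorithm $\alg$ with observation-phase parameter $c\in(0,1)$ and an arbitrary candidate constant $c'>1$, and exhibit a single instance on which the strong $c'$-competitive inequality fails with probability bounded away from zero independent of $n$, which is incompatible with the required $1-o(1)$ success probability.

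The plan is to use an instance with a very small (constant) optimum. Set $K=\lceil c'\rceil+1$ and let $\mathcal I$ consist of $K$ pairwise disjoint short intervals $A_1,\dots,A_K$ together with $n-K$ identical copies of a single long interval $B$ that overlaps every $A_i$. Every independent subset of $\mathcal I$ is either contained in $\{A_1,\dots,A_K\}$ or consists of a single copy of $B$, so $\opt(\mathcal I)=K$. I would then focus on the event $\xi$ that all $K$ short intervals happen to arrive among the first $cn$ positions. Under a uniformly random permutation,
\[
\Pr[\xi]=\frac{\binom{cn}{K}}{\binom{n}{K}}=\frac{cn(cn-1)\cdots(cn-K+1)}{n(n-1)\cdots(n-K+1)}\xrightarrow{n\to\infty}c^K,
\]
so $\Pr[\xi]\geq c^K/2$ for all sufficiently large $n$, a positive constant independent of $n$.

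Conditioned on $\xi$, the hypothesis on $\alg$ forces it to reject each $A_i$ during the observation phase, and the entire action phase consists only of copies of the single interval $B$. Since any two copies of $B$ intersect, $|\alg(\mathcal I)|\leq 1$ deterministically under $\xi$. By construction, $\opt(\mathcal I)/c'=K/c'>1$, so $|\alg(\mathcal I)|<\opt(\mathcal I)/c'$ whenever $\xi$ occurs. Therefore
\[
\Pr\bigl[|\alg(\mathcal I)|\geq \opt(\mathcal I)/c'\bigr]\leq 1-\Pr[\xi]\leq 1-c^K/2,
\]
which is bounded away from $1$ by a constant depending only on $c$ and $c'$, not on $n$. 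This contradicts the $1-o(1)$ requirement of strong $c'$-competitiveness, and since $c'$ was arbitrary, $\alg$ cannot be strongly $O(1)$-competitive.

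The main design subtlety, and really the only delicate point, is to keep $\opt$ constant in $n$: if $\opt$ were allowed to grow, hypergeometric concentration (\cref{lem:hoeffding}) would guarantee that a constant fraction of the optimum intervals survives in the action phase, giving $\alg$ enough remaining structure to achieve an $O(1)$-competitive ratio. Taking $K=\lceil c'\rceil+1$ is the minimal choice that simultaneously makes $\opt$ a fixed constant (so the entire optimum can plausibly be swallowed by the observation phase with constant probability) and guarantees $K/c'>1$ (so the collapse to $|\alg|\leq 1$ actually violates the competitive bound).
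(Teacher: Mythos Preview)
Your proof is correct and follows essentially the same approach as the paper: construct an instance with a constant-size optimum so that, with constant probability, all the ``interesting'' intervals land in the observation phase and only copies of a single interval remain for the action phase, forcing $|\alg|\le 1$. The only cosmetic difference is the instance---the paper uses $2d$ pairwise disjoint intervals with $n-(2d-1)$ copies of the last one, whereas you use $K$ short disjoint intervals together with many copies of a long interval overlapping them all---but the mechanism and the probability estimate are identical.
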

\begin{proof}

To see this, assume there is a strongly $d$-competitive algorithm for interval scheduling for some constant $d>0$. We construct an instance with optimum value $2d$ consisting of $2d$ disjoint intervals where the input consists of one copy of each of the first $2d-1$ intervals and $n-(2d-1)$ copies of the last interval. Since the algorithm is strongly $d$-competitive, with high probability, it must pick at least $2$ disjoint intervals. But, since it does not pick any of the first $cn$ intervals, with constant probability (roughly $c^{2d-1}$ for large $n$), all the intervals that have a single copy will appear in the initial phase, and will not get picked; hence, the algorithm can output at most one interval, a contradiction.
\end{proof}

\begin{figure}[h]
    \centering
    \includegraphics[width=0.3\linewidth]{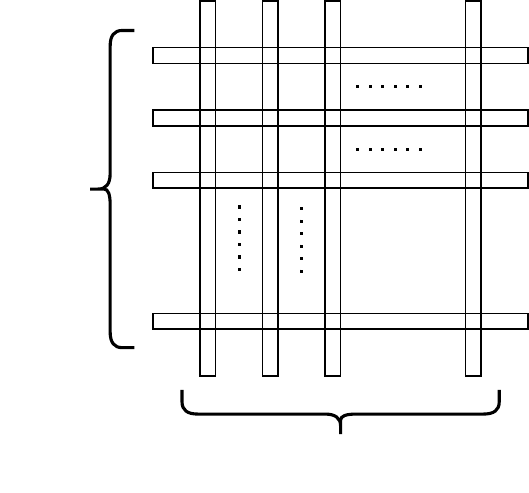}
    \caption{Instance with large inductive independence number}
    \label{fig:instance}
\end{figure}

\begin{proposition}
\label{thm:inducindep}
    The inductive independence number of axis-aligned rectangles is $\Omega(n)$.
\end{proposition}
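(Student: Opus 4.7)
\medskip

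My plan is to exhibit an explicit collection of $n$ axis-aligned rectangles whose intersection graph forces $\rho = \Omega(n)$ regardless of the vertex order chosen. The construction I have in mind is a ``cross/grid'' family: put $m := \lfloor n/2 \rfloor$ thin horizontal strips $R_1,\dots,R_m$ at distinct heights, each spanning the full horizontal extent, together with $m$ thin vertical strips $C_1,\dots,C_m$ at distinct $x$-coordinates, each spanning the full vertical extent. Concretely, one can take $R_i = [0,m{+}1]\times[i-\tfrac{1}{4}, i+\tfrac{1}{4}]$ and $C_j = [j-\tfrac{1}{4}, j+\tfrac{1}{4}]\times[0,m{+}1]$. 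Three structural properties are immediate and will drive the whole argument: the $R_i$'s are pairwise disjoint (different $y$-bands), the $C_j$'s are pairwise disjoint (different $x$-bands), and every $R_i$ intersects every $C_j$ at their little ``crossing'' square.

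Given these properties, I plan to argue the lower bound on $\rho$ directly from the definition. Fix any total order $\prec$ on the $2m$ rectangles and let $v$ be the $\prec$-smallest element. By symmetry, assume $v = R_i$ is one of the horizontal strips (the vertical case is identical after swapping axes). Then every vertical strip $C_j$ satisfies $C_j \succ v$ and $\{v, C_j\} \in E$, and the collection $U := \{C_1,\dots,C_m\}$ is itself an independent set since the $C_j$'s are pairwise disjoint. Hence $|\{u \in U : u \succ v,\ \{u,v\}\in E\}| = m \geq n/2$, so $\rho \geq n/2$ as desired. Since the choice of $\prec$ was arbitrary, this bound holds for every ordering, which is exactly what the definition of inductive independence number requires.

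I do not anticipate a real obstacle here; the only thing to verify carefully is that the family really has the three structural properties claimed (disjointness within each group and full pairwise intersection across groups), which is an elementary geometric check from the coordinates above. One minor sanity point is to handle the parity of $n$: if $n$ is odd, use $m = \lfloor n/2 \rfloor$ and pad with one extra rectangle placed far away so the intersection graph is unchanged up to an isolated vertex, which affects neither the total count nor $\rho$.
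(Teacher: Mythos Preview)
Your proposal is correct and takes essentially the same approach as the paper: both construct a family of $n/2$ pairwise-disjoint horizontal strips and $n/2$ pairwise-disjoint vertical strips such that every horizontal strip crosses every vertical strip, and then observe that for any ordering the first vertex has $n/2$ independent successors adjacent to it. Your write-up is slightly more explicit (concrete coordinates, the symmetry case spelled out, parity handled), but the construction and the argument are the same.
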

\begin{proof}
    Consider the instance shown in Figure \ref{fig:instance}, consisting of $n/2$ ``tall and thin" rectangles (call them $V$) and $n/2$ ``short and wide" rectangles (call them $H$). Let $\prec$ be any ordering of the rectangles in $H\cup V$ in which the first rectangle (say $R$) comes from $H$. Then note that rectangles in $V$ form an independent set wherein each rectangle comes after $R$ in $\prec$. Since $R$ intersects every rectangle in $V$, the inductive independence number of the instance is at least $n/2$.    
\end{proof}
\section{Hypergeometric Concentration}\label{sec:hoeffding}

\hoeffding*
\begin{proof}
It is shown in~\cite{Chvatal79}, that for $t\geq 0$ and $0<(p+t)<1$,
\begin{align}\pr[X\geq(p+t)n] \leq \left( \left(\frac{p}{p+t}  \right)^{p+t}\left(\frac{1-p}{1-p-t}  \right)^{1-p-t}\right)^n.\label{eqn:KL}
\end{align}
The bounds claimed can now be derived from this bound and is a routine exercise.

We first prove the first inequality: $\pr[X \geq (1+\delta) pn]\leq \exp(-\delta^2 pn /3)$. We will prove this inequality by applying Bound~\eqref{eqn:KL}. This bound will not be applicable for two corner cases, which we first prove separately.

\begin{itemize}
\item{Case:} 
If $(1+\delta)p = 0$, then $p=0$, and the inequality trivially is true as its right hand side becomes $1$.
\item{Case:} Now, if $(1+\delta)p = 1$, then $p = \frac 1{1+\delta}$.
Now, if the left hand side probability is $0$, we are trivially done. Otherwise, $\Pr[X=(1+\delta)pn] = \Pr[X=n]=\prod_{i=0}^{n-1} \frac{M-i}{N-i}\leq (\frac M N)^n\leq p^n = (\frac 1{1+\delta})^n=(1-\frac \delta{1+\delta})^n \leq \exp(-\frac {\delta n}{1+\delta})\leq \exp(-\frac{\delta^2 n}{3}),$ where the last inequality follows from observing $\frac \delta{1+\delta} \geq \frac{\delta^2}{3}$ for all $\delta\in[0,1]$.

\item{Case:}
Now, for the remaining case, we set $t=\delta p$ and plug it in Bound~\eqref{eqn:KL}:
\begin{align*}\pr[X\geq(1+\delta)pn] &\leq  \left( \left(\frac{1}{1+\delta}  \right)^{(1+\delta)p}\left(\frac{1-p}{1-(1+\delta)p}  \right)^{1-(1+\delta)p}\right)^n\\
&= \left( \left(\frac{1}{1+\delta}  \right)^{(1+\delta)p}\left(1 +\frac{\delta p}{1-(1+\delta)p}  \right)^{1-(1+\delta)p}\right)^n\\
\end{align*}
Now, since $1+x\leq e^x$ for all $x\in \mathbb R$, we get
\begin{align*}\pr[X\geq(1+\delta)pn] &\leq  \left( \left(\frac{1}{1+\delta}  \right)^{(1+\delta)p}\left(e\right)^{\delta p}\right)^n
= \left(\frac {e^\delta}{(1+\delta)^{1+\delta
}}\right)^{pn}.
\end{align*}
Now, to finish the proof of the first part, we will show $\frac {e^\delta}{(1+\delta)^{1+\delta
}}\leq e^{-\delta^2/3}$, or equivalently by taking $\log$'s on both sides, $f(\delta):=\delta - (1+\delta)\log(1+\delta)+\frac{\delta^2}{3}\leq 0$. Taking derivative with respect to $\delta$, we get $f'(\delta)=1 -\log(1+\delta)-1+\frac{2\delta}{3}=-\log(1+\delta)+\frac{2\delta}{3}$. Taking derivative with respect to $\delta$ once more, we get $f''(\delta)= -\frac{1}{1+\delta} +\frac{2}{3}$. 

Now, for $\delta\in[0,\frac 1 2)$, $f''(\delta)<0$ and  for $\delta >\frac 1 2$, $f''(\delta)> 0$. Thus, $f'(\delta)$ decreases in the interval $[0,\frac 1 2)$ and then increases in the interval $(\frac 1 2, 1]$. Since, $f'(0)=0$ and $f'(1)= -\log 2 +\frac 2 3 <0$, $f'(\delta)\leq 0$ for $\delta\in[0,1]$. Therefore, $f(\delta)$ never increases in the interval $[0,1]$. Now since $f(0)=0$,  we have $f(\delta)\leq 0$ for $\delta\in[0,1]$, as desired. Thus,
$$\pr[X\geq(1+\delta)pn] \leq \exp(-\delta^2pn/3).$$
\end{itemize}
We now prove the second inequality: $\pr[X \leq (1-\delta) pn]\leq \exp(-\delta^2 pn /2).$ Like in the proof of the first inequality above, we first handle two corner cases that do not follow from Bound~\eqref{eqn:KL}.

\begin{itemize}
    \item{Case:} 
If $(1-\delta)p = 0$, then either $p=0$ or $\delta = 1$. If $p=0$, then again the inequality follows trivially as the right hand side becomes $1$. Thus, we assume $\delta = 1$. Now, if the left hand side probability is $0$, we are trivially done. Otherwise, the probability $\Pr[X\leq (1-\delta)pn] = \Pr [X = 0] = \prod_{i=0}^{n-1} \frac{M-N-i}{M-i} \leq (\frac {M-N}{M})^n \leq (1-p)^n \leq \exp(-pn) \leq \exp(-\frac{\delta^2 p n}{2})$, where the last inequality follows from observing $1\geq \frac{\delta^2}2$ for $\delta = 1$.
\item{Case:} Next, if $(1-\delta)p = 1$ then $p=1$ and $\delta = 0$. Thus, again the inequality follows trivially, as the right hand side becomes $1$.
\item{Case:} To prove the remaining case, let us say that the non-red balls are blue, and the fraction of blue balls is $q=1-p$. Let $Y$ be the number of blue balls that appear in the sample. Then, for $t\geq 0$,
\begin{align*}
\pr[Y\geq (q+t)n])&\leq \left( \left(\frac{q}{q+t}  \right)^{q+t}\left(\frac{1-q}{1-q-t}  \right)^{1-q-t}\right)^n\\
&=\left( \left(\frac{1-p}{1-p+t}  \right)^{1-p+t}\left(\frac{p}{p-t}  \right)^{p-t}\right)^n.
\end{align*}
Setting $t=\delta p$, we have,
$\pr[X\leq (1-\delta)pn] \leq \pr[Y\geq n-(1-\delta)pn] = \pr[Y\geq (q+t)n]$.
Thus, 
\begin{align*}
    \pr[X\leq (1-\delta)pn] &\leq \left( \left(\frac{1-p}{1-p+t}  \right)^{1-p+t}\left(\frac{p}{p-t}  \right)^{p-t}\right)^n\\
    &= \left( \left(1 -\frac{t}{1-p+t}  \right)^{1-p+t}\left(\frac{1}{1-\delta}  \right)^{(1-\delta)p}\right)^n\\
    &\leq \left( \frac{e ^{-\delta}}{\left({1-\delta}  \right)^{(1-\delta)}}\right)^{pn},
\end{align*}
where the last inequality follows by observing $1+x \leq e^x$ for all $x\in \mathbb R$.
Now, to finish the proof, we will show $\frac{e ^{-\delta}}{\left({1-\delta}  \right)^{(1-\delta)}}\leq e^{-\delta^2/2}$, or equivalently by taking $\log$'s on both sides, we will show $g(\delta):=-\delta -(1-\delta)\log (1-\delta) +\frac{\delta^2}{2}\leq 0$. Taking derivative with respect to $\delta$, we have $g'(\delta)= -1 +\log(1-\delta)+1 +\delta=\log(1-\delta) +\delta$. Taking derivative once more, $g''(\delta) = -\frac 1{1-\delta} + 1$.

Now, $g''(\delta) \leq 0$ for $\delta \in [0,1)$. Thus, $g'(\delta)$ is non-increasing in the interval $[0,1)$. Since, $g'(0)=0$, we have $g'(\delta) \leq 0$ for $\delta\in [0,1)$. Thus, $g(\delta)$ is non-increasing in the interval $[0,1)$. Now, since $g(0)=0$, we have, $g(\delta)\leq 0$ for $\delta\in[0,1)$. 
Thus, for all $\delta\in[0,1)$,
$$\pr[X\leq (1-\delta)pn] \leq \exp(-\delta^2 pn/2).$$

For the special case of $\delta = 1$, we need to show $\pr[X=0]\leq \exp(-pn/2)=\exp(-\frac{Mn}{2N})$. Now, if $n > N-M$, $\pr[X=0]=0$, and we are trivially done. Otherwise,
\begin{align*}
    \pr[X=0] &= \prod_{i=0}^{n-1} \frac{N-M-i}{N-i}=\prod_{i=0}^{n-1}\left(1 - \frac {M}{N-i}\right)\leq \exp\left(-M\sum_{i=0}^{n-1} \frac 1{N-i}\right)\\
    &=\exp\left(-\frac M N\sum_{i=0}^{n-1} \frac N{N-i}\right)\leq \exp\left(-\frac {Mn} N\right)\leq \exp\left(-\frac {Mn} {2N}\right).
\end{align*}
\end{itemize}
\end{proof}
\section{Extension to Bounded Hyperrectangles}\label{sec:boundedExtensionProof}
We now extend our result for online interval scheduling to the maximum independent set of $(K,D)$-bounded hyperrectangles. We first need the definition of $(K,D)$-bounded hyperrectangles.

\begin{definition}\label{def:boundedHyperrectangles}
   We say a set of $d$-dimensional axis-aligned hyperrectangles $\mathcal H$ is $(K,D)$-bounded if each $H\in \mathcal H$ is a subset of $[0,K]^d$ and for all $j\in[d]$ and $i \in\mathbb{Z}$,  the number of hyperrectangles whose projection along dimension $j$ is completely  contained in $(i, i + 1)$ is at most $D$, and $(\log K)^d \geq D$.
\end{definition}

Note that the condition $(\log K)^d \ge D$ is purely technical. We show that there exists an online algorithm which on $(K,D)$-bounded $d$-dimensional instances is $O_d(D^2 +  (\log K)^d\cdot \log\log K)$-competitive with probability $1-o_K(1)$. More precisely, we have the following lemma.
\generalBounded*
\begin{proof}[Proof of~\cref{lem:generalBounded}]

We first set up some definitions and notations that will help in describing and analyzing our algorithm.

Let $\mathcal H$ be a set of $d$-dimensional hyperrectangles that forms a $(K,D)$-bounded instance that is given as input and $|\mathcal H|=n$. Let $\epsilon>0 $ be a fixed constant. Our analysis works for a range of values for $\epsilon$; for concreteness, we set $\epsilon = 1$. Also, let
$k:= \lceil \log_{1+\epsilon} K\rceil.$

We partition $\mathcal H$ into $d+k^d  \leq (k+1)^d$ sets: $S_x$'s and $S_y$'s, where each $x$ is an index in $[d]$ and each $y$ is a $d$-tuple in $[k]^d$, depending on the side lengths of the hyperrectangles in $\mathcal H$. To define this partition, we partition the range of side lengths of the hyperrectangles in $\mathcal H$, i.e., $[0,(1+\epsilon)^k] \supseteq [0,K]$, into $k+1$ intervals: $\range_0,\cdots, \range_k$ as follows.
\[\range_0 := [0,1], \
\range_i := ((1+\epsilon)^{i-1},(1+\epsilon)^i]\text{ for all $i\in [k]$.}\]
Note that 
$[0,K]\subseteq \range_0 \uplus \cdots \uplus \range_k.$
Now, $S_x$'s and $S_y$'s can be defined as follows.
\[S_x := \{ H \in \mathcal H \mid x \text{ is the smallest index in } [d] \text{ such that } l_x(H) \in \range_0\} \text{ for all $x\in[d]$.}\]
\[S_y := \{ H\in \mathcal H \mid \text{ for all } j\in [d] \ l_y(H) \in \range _{x_y}\} \text{ for all $y \in [k]^d $.}\]
Note that the hyperrectangles in $S_x$'s have at least one side of length at most one, whereas the hyperrectangles in $S_y$'s have all their side lengths strictly more than one. Observe that
$\mathcal H$ is the disjoint union of the $S_x$'s and the $S_y$'s. 

For a set of hyperrectangles $S$ (typically $S\subseteq \mathcal H$), we use $\OPT(S)$ to denote a maximum independent set of hyperrectangles in $S$, and $\opt(S):=|\OPT(S)|$. Finally, for ease of notation, we use $x$ to denote an arbitrary subscript in $[d]$ and $y$ to denote an arbitrary subscript in $[k]^{d}$. We use the subscript $i$ for both $x$'s and $y$'s. The sets these subscripts belong to will not be stated explicitly and should be understood based on the letter used: $x$, $y$, or $i$. 

We now describe our algorithm before moving to its analysis.

\noindent\textbf{Algorithm.} 
     The algorithm has two phases: the \textit{observation} phase, when it receives the first $\lceil \frac n 2 \rceil$ hyperrectangles, and the \textit{action} phase, when it receives the remaining $n-\lceil\frac n 2 \rceil$ hyperrectangles. 
    Let $\mathcal H_0$ denote the set of hyperrectangles received during the observation phase and $\mathcal H_1$  the set of hyperrectangles received during the action phase. 
    Note that $\mathcal H=\mathcal H_0 \uplus \mathcal H_1$, where $\mathcal H_0$ is a subset of $\mathcal H$ of size $\lceil\frac n 2\rceil$ chosen uniformly at random, and $\mathcal H_1$ is a subset of $\mathcal H$ of size $n-\lceil\frac n 2\rceil$ chosen uniformly at random.\footnote{We often do not say with probability one or almost surely when it is clear from the context.} 
    
    \begin{itemize}
    
    \item{\textbf{Observation phase}} In this phase hyperrectangles in $\mathcal H_0$ arrive.
    \begin{itemize}
    \item{Initialization:} Initialize $G_i=\emptyset$ for all $i$. 
    \item When a new hyperrectangle $H\in \mathcal H_0$ arrives: Add $H$ to $G_i$ iff $G\in S_i$ and $H$ does not intersect any of the hyperrectangles in $G_i$.
    \item At the end of the observation phase, $\hat L_i:= |G_i|$.
    \item
   Compute the index $m$ as follows. Let $m_1$ be an index $x$ such that $\hat L_{m_1} \geq \hat L_x$ for all $x$. Let $m_2$ be an index $y$ such that $\hat L_{m_2} \geq \hat L_y$ for all $y$. If $\hat L_{m_1} > \frac{(k+1)^d}D \cdot \hat L_{m_2}$, set $m=m_1$, else set $m=m_2$.
    \end{itemize}
    \item{\textbf{Action phase}} In this phase, the hyperrectangles in $\mathcal H_1$ arrive. 
    \begin{itemize}
        \item{Initialization:} $R_m=\emptyset$. On arrival of a hyperrectangle $H\in \mathcal H_1$, do the following.
    \item If $H\in S_m$ and $H$ does not intersect any hyperrectangles in $R_m$, select $H$ and add $H$ to $R_m$.
    \item Else, if
    $H$ is the last interval to arrive and $R_m=\emptyset$, then select $H$.
    \item Else, if $H\notin S_m$, then do not select $H$.
    \end{itemize}
    \end{itemize}
 In the observation phase, the algorithm uses the greedy algorithm to estimate $\opt(L_i)$, where $L_i = S_i\cap \mathcal H_0$, for all $i$. Thus, $\hat L_i$ is an estimate of $\opt(L_i)$.
 Then, 
 it either sets $m$ to be an $x$ when $L_m\geq L_x$ for all $x$ and $\hat L_m >\frac{(k+1)^d}D\cdot\hat L_y$ for all $y$, else it  sets $m$ to be a $y$ such that $\hat L_m\geq \hat L_x$ for all $x$. In the action phase, the algorithm simply runs the greedy algorithm $\greedy$ on $R_m := \mathcal H_1 \cap S_m$, ignoring all hyperrectangles not in $S_m$. When the algorithm receives the very last hyperrectangle and $R_m = \emptyset$, it picks the last hyperrectangle. This ensures that the algorithm outputs an independent set of size at least one whenever $n
    \geq 2$.
    
We now analyze the competitive ratio of the above algorithm.

\noindent\textbf{Analysis.}
We assume that $\opt(\mathcal H) \geq 10^4\cdot d\cdot 4^{d+1}\cdot (k+1)^d\cdot \log k$, else since the algorithm is guaranteed to pick at least one hyperrectangle (for $n\geq 2$), we get a  competitive ratio of $10^4\cdot d\cdot 4^{d+1}\cdot (k+1)^d\cdot \log k$.

We now define an event $E$ and show that (I) if $E$ holds, then the algorithm achieves a competitive ratio of $O(D^2+(k+1)^d)$, and (II) $E$ holds with probability $1-o_k(1)$.
In order to define the event $E$, we first define $L_i = \mathcal H_0 \cap S_i$, $\hat L_i = |\greedy(L_i)|$, and $R_i = \mathcal H_1 \cap R_i$ for all $i$, as explained above. Furthermore, let us define $B=\{i \mid \opt(S_i) \geq 10^3 d\log k\}$.

We say that the event $E$ holds iff for all $i\in B$, \begin{align}\min(\opt(L_i),\opt(R_i))\geq (1-\delta)\frac {\opt(S_i)} 2,\quad\text{where $\delta = \frac 1{10}.$}\label{eqa} \end{align}

We first show that (I) is true.

\noindent\textbf{(I) Assuming $E$ holds, the algorithm is $O(D^2+(k+1)^d)$-competitive.}

We first prove four basic bounds that will help us in the proof.

First, we want to show that $\sum_{i\in B} \opt(S_i)$ is large, i.e., at least a constant fraction of $\opt(\mathcal H)$. Intuitively, this implies that even if an algorithm ignored hyperrectangles in $\cup_{i\notin B} S_i$, it would have substantial value available.
\begin{align}
     &\opt(\mathcal H)\leq \sum_{i\in B} \opt(S_i) + \sum_{i\notin B} \opt(S_i) \nonumber\\
    \implies &\opt(\mathcal H) -\sum_{i\notin B} \opt(S_i)\leq \sum_{i\in B} \opt(S_i)\nonumber\\
    \implies &\frac{\opt(\mathcal H)}2 \leq\sum_{i\in B} \opt(S_i),\label{eqb}
\end{align}
where the last inequality holds since $\sum_{i\notin B} \opt(S_i) \leq (k+1)^d\cdot10^3\cdot d\log k\leq  \frac{\opt(\mathcal H)}2$.

Next, we want to show for $i\in B$, $\opt(R_i)$ is at least a constant fraction of $\opt(L_i)$.
For each $i\in B$, 
\begin{align}\opt(R_i) \geq \frac{1-\delta}2\opt(S_i)\geq \frac{1-\delta}2 \opt(L_i),\label{eqc}\end{align}
where the last inequality holds as $L_i \subseteq S_i$.

We also have the following guarantees on the $\greedy$ algorithm for $(K,D)$-bounded $d$-dimenional hyperrectangles.

\begin{align}\opt(L_x)\geq\hat L_x \geq \frac{\opt(L_x)}{3D}\label{eqe}\end{align}

\begin{align}\opt(L_y)\geq\hat L_y \geq \frac{\opt(L_y)}{4^d}\label{eqd}\end{align}

We now prove the following claim.
\begin{claim}
    $m\in B$, where $m$ is the index picked by the algorithm at the end of the observation phase.
\end{claim}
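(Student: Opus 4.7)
The plan is to mirror the proof of the analogous claim in the interval case (Lemma~\ref{lem:bounded}): assume for contradiction that $m\notin B$, derive a lower bound on $\sum_{i\in B}\opt(L_i)$ from the event $E$ and the starting assumption on $\opt(\mathcal H)$, and contrast it with an upper bound on the same quantity derived from the rule used to choose $m$ and the greedy guarantees (eqe) and (eqd). The key new feature here, compared with the interval case, is that the algorithm must balance two different greedy-approximation factors --- the $3D$-factor available on the $S_x$'s and the $4^d$-factor available on the $S_y$'s --- and the threshold $(k+1)^d/D$ in the $m$-selection rule is exactly calibrated so that both branches yield the same (insufficient) upper bound.

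For the lower bound, I would combine Bound~\eqref{eqa} (which gives $\opt(L_i)\geq \tfrac{1-\delta}{2}\opt(S_i)$ for $i\in B$) with Bound~\eqref{eqb} (which says $\sum_{i\in B}\opt(S_i)\geq \opt(\mathcal H)/2$) to obtain $\sum_{i\in B}\opt(L_i)\geq \tfrac{1-\delta}{4}\opt(\mathcal H)$. Using the standing assumption $\opt(\mathcal H)\geq 10^4 \cdot d\cdot 4^{d+1}(k+1)^d\log k$, this gives a lower bound on the order of $d\cdot 4^{d+1}(k+1)^d\log k$ with a large leading constant.

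For the upper bound, I would split the sum as $\sum_i\opt(L_i)=\sum_x\opt(L_x)+\sum_y\opt(L_y)$ and handle the two branches of the $m$-selection rule separately. If the algorithm sets $m=m_1\notin B$, then $\hat L_{m_1}\leq\opt(L_{m_1})<10^3 d\log k$; for each $x$, $\hat L_x\leq\hat L_{m_1}$ and (eqe) gives $\opt(L_x)\leq 3D\hat L_{m_1}$, summing to at most $3dD\cdot 10^3 d\log k$, while the case hypothesis $\hat L_{m_1}>\tfrac{(k+1)^d}{D}\hat L_{m_2}$ combined with (eqd) yields $\opt(L_y)\leq 4^d\hat L_y\leq 4^d\cdot\tfrac{D}{(k+1)^d}\hat L_{m_1}$, summing over at most $(k+1)^d$ values of $y$ to at most $4^d D\cdot 10^3 d\log k$. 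Using $3d+4^d\leq 4^{d+1}$ and the technical condition $D\leq(\log K)^d\leq k^d\leq(k+1)^d$ from~\Cref{def:boundedHyperrectangles}, the total is at most $10^3 d\cdot 4^{d+1}(k+1)^d\log k$, which is strictly smaller than the lower bound and produces the contradiction. If instead $m=m_2\notin B$, a symmetric calculation applies: now $\hat L_{m_2}\leq 10^3 d\log k$ directly bounds $\sum_y\opt(L_y)$, and the case hypothesis $\hat L_{m_1}\leq \tfrac{(k+1)^d}{D}\hat L_{m_2}$ together with (eqe) gives $\opt(L_x)\leq 3D\cdot\tfrac{(k+1)^d}{D}\cdot 10^3 d\log k=3(k+1)^d\cdot 10^3 d\log k$, so the $D$'s cancel and one lands at the same upper bound as in Case~1.

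The main obstacle is not conceptual but bookkeeping: one has to check that the choice of the threshold $(k+1)^d/D$ makes the $D$ factor cancel in Case~2 while keeping the $y$-contribution small in Case~1, and one has to invoke the technical condition $D\leq(\log K)^d$ exactly once (in Case~1) to replace $D$ by $(k+1)^d$ so that the two upper bounds match in shape. The leading constant $10^4$ in the assumption $\opt(\mathcal H)\geq 10^4\cdot d\cdot 4^{d+1}(k+1)^d\log k$ is chosen so that, after losing the factor $\tfrac{1-\delta}{4}$ in the lower bound, it still comfortably dominates the coefficient $10^3\cdot 4^{d+1}$ appearing in the upper bound, which closes the argument.
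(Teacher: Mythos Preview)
Your proposal is correct and follows essentially the same approach as the paper: derive the lower bound $\sum_{i\in B}\opt(L_i)\geq\tfrac{1-\delta}{4}\opt(\mathcal H)$ from Bounds~\eqref{eqa} and~\eqref{eqb}, and contrast it with an upper bound obtained via the greedy guarantees~\eqref{eqe}, \eqref{eqd} together with the selection rule and the condition $D\leq(k+1)^d$. The only presentational difference is that the paper collapses your two cases into one by observing that, under $D\leq(k+1)^d$, the expression $3D\sum_x\hat L_x+4^d\sum_y\hat L_y$ is maximized in the $m=m_2$ branch, whereas you treat both branches explicitly and verify they yield the same bound $10^3\cdot d\cdot 4^{d+1}(k+1)^d\log k$.
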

\begin{proof}
  We assume $m\notin B$ and derive a contradiction. Combining inequalities \eqref{eqa} and \eqref{eqb} we have,

\begin{align*}\sum_{i\in B}\opt(L_i)\ge \frac{1-\delta}{4}\opt(\mathcal H)&\ge \frac{(1-\delta)\cdot 10^4}{4}  \cdot d\cdot 4^{d+1}\cdot (k+1)^d\cdot \log k\\
&=\frac{9}{4} \cdot 10^3\cdot d\cdot 4^{d+1}\cdot (k+1)^d\cdot \log k.\end{align*}
Whereas, if $m\not\in B$, then 
\begin{align*}
    &\sum_{i\in B}\opt(L_i)\leq\sum_{i} \opt(L_i)\leq
    3D\sum_x\hat L_x + 4^d\sum_x\hat L_y 
    \\
    &\leq d\cdot (3D)\cdot \frac{(k+1)^d}{D} \cdot 10^3\cdot d\cdot \log k
    +
    (k+1)^d\cdot 4^d \cdot 10^3\cdot d\cdot \log k\\
    &\leq 4^{d+1}\cdot 10^3\cdot d\cdot(k+1)^d\cdot\log k 
\end{align*}

since, $m\notin B$ and $(k+1)^d \geq D$ implies that the expression obtained in terms of $\hat L_i$'s is maximized when we set $m$ to be a $y$ and set largest possible values to the $\hat L_i$'s respecting $\hat L_x \leq \frac{(k+1)^d} D\cdot\hat L_m$ and $\hat L_y \leq \hat L_m \leq \opt(S_m)\leq 10^3\cdot d\cdot \log k$. Thus, combining the above two inequalities, we get $1\geq \frac 9 4$, a contradiction. Thus, $m\in B$.
\end{proof}

We now have two cases depending on whether the algorithm picks $m$ to be an $x$ or a $y$ at the end of the observation phase.
In either case, we will show that the output of the algorithm $|\greedy(R_m)|\geq \frac 1{O(D^2+(k+1)^d)}\opt(\mathcal H)$.

\noindent\textbf{Case: $m=x$ for some $x$.}
\begin{align*}
    \frac{\opt(\mathcal H)}2&\leq \sum_{i\in B} \opt(S_i)\quad\text{(from Bound~\eqref{eqb})}\\
    &\leq  \frac 2{1-\delta} \sum_{i\in B}\opt(L_i)\quad\text{(from Bound~\eqref{eqa})}\\
    &\leq \frac 2{1-\delta}\left(\sum_x\opt(L_x) + \sum_{y}\opt(L_y)\right)\\
    &\leq \frac 2{1-\delta}\left(3D\sum_x L_x + 4^d\sum_{y}\hat L_y\right)\\
    &\leq \frac 2{1-\delta}\left(d\cdot3D\cdot\hat L_m + (k+1)^d\cdot \frac {D\cdot 4^d}{(k+1)^d} \cdot \hat L_0\right),\\
\end{align*}
where the last inequality follows from the fact that the algorithm picks $m$ to be an $x$ when $\hat L_m \geq \hat L_x$ and $\hat L_m \geq \frac{(k+1)^d}{D}\hat L_y$ for all $x$ and $y$. Thus,
\begin{align*}
    \opt(L_m)\geq\hat L_m &\geq \frac{1-\delta}{4D(3d+4^d)}\opt(\mathcal H)\\
    \implies \opt(R_0) &\geq \frac{(1-\delta)^2}{8D(3d+4^d)}\opt(\mathcal H)\quad\text{(from Bound~\eqref{eqc}).}
\end{align*}

Now invoking Lemma~\ref{lem:greedyCompetitiveRatioprelim}, the size of the output is 
\[|\greedy(R_0)|\geq \frac 1{3D} \opt(R_0)\geq \frac{(1-\delta)^2}{24D^2(3d+4^d)}\opt(\mathcal H)\geq \frac{1}{100D^2 4^d}\opt(\mathcal H).\]

\noindent\textbf{Case: $m=y$ for some $y$.}
\begin{align*}
    \frac{\opt(\mathcal H)}2&\leq \sum_{i\in B} \opt(S_i)\quad\text{(from Bound~\eqref{eqb})}\\
    &\leq  \frac 2{1-\delta} \sum_{i\in B}\opt(L_i)\quad\text{(from Bound~\eqref{eqa})}\\
    &\leq \frac 2{1-\delta}\left(\sum_x\opt(L_x) + \sum_{y}\opt(L_y)\right)\\
    &\leq\frac 2{1-\delta}\left(3D\sum_x\hat L_x + 4^d\sum_{y} \hat L_y\right)\\
    &\leq \frac 2{1-\delta}\left( d\cdot3\cdot(k+1)^d \cdot\hat L_m + 4^d\cdot(k+1)^d \cdot \hat L_m \right),\\
\end{align*}
where the last inequality follows from the fact that $\hat L_x \leq \frac{(k+1)^d}D\cdot \hat L_m$  and $L_y \leq L_m$ for all $x$ and $y$ as $m$ is a $y$.

Thus,
\begin{align*}
    \opt(L_m) \geq \hat L_m &\geq \frac{1-\delta}{4 (3d+4^d)(k+1)^d} \opt(\mathcal H)\\
    \implies \opt(R_m) &\geq \frac{(1-\delta)^2}{8(3d+4^d)(k+1)^d}\opt(\mathcal H)\quad\text{(from~ Bound \eqref{eqc}).}
\end{align*}

Now invoking Lemma~\ref{lem:greedyBounded}, the size of the output is 
\[|\greedy(R_m)|\geq \frac 1{4^d}\cdot \opt(R_m)\geq \frac{(1-\delta)^2}{8\cdot 4^d(3d+4^d)(k+1)^d}\opt(\mathcal H)\geq \frac{1}{100 \cdot4^{2d}\cdot(k+1)^d}.\]

Thus, in either case, 
\[|\greedy(R_m)|\geq \frac{(1-\delta)^2}{100\cdot 4^{2d}(D^2+(k+1)^d)}\opt(\mathcal H).\]

Thus, our algorithm is $\max(10^4\cdot d\cdot 4^{d+1}\cdot (k+1)^d\cdot \log k, 100\cdot 4^{2d}(D^2+(k+1)^d))$-competitive, which implies that it is $O(d\cdot 5^{2d}(D^2+k^d\log k))$-competitive.

We now show that (II) is true.

\noindent\textbf{(II) $E$ holds with probability at least $1-(3/k)^d$.}

Fix an $i\in B$. 

In the following, we use the fact that $\mathcal H_0$ is a uniformly random subset of $\mathcal H$ of size $\lceil\frac n 2\rceil$, and upper bound the probability that only a small fraction of $\OPT(S_i)$ is picked in $\mathcal H_0$.

\begin{align*}
\pr\left[\opt(L_i) < (1-\delta)\frac{\opt(S_i)} 2\right] &\leq  \pr\left[\opt(L_i) \leq (1-\delta)\frac{\opt(S_i)} n \left\lceil \frac n 2 \right\rceil\right] \\
&\leq \exp\left(- \frac {\delta^2}2 \frac{\opt(S_i)}n \left\lceil \frac n 2\right\rceil\right) \\
&\leq \exp\left(- \frac {\delta^2}4 \opt(S_i)\right)\\
&\leq \exp\left(- \frac {10^3\delta^2 d}4 \log k\right)\\
&\leq \frac 1 {k^{2d}},
\end{align*}
where the second inequality follows from \Cref{lem:hoeffding} where $N, M, P,$ and $n$ in the Lemma statement are set to $n, \opt(S_i), \frac{\opt(S_i)}n,$ and $\lceil \frac n 2 \rceil$ respectively, and the fourth inequality follows since $\opt(S_i)\geq 10^3 d\log k$, as $i\in B$. 

A similar calculation, where we use the fact that $\mathcal H_1$ is a uniformly random subset of $\mathcal H$ of size $n-\lceil \frac n 2 \rceil$, yields for all $n \geq 100$,
\begin{align*}
\pr\left[\opt(R_i) < (1-\delta)\frac{\opt(S_i)} 2\right] 
&\leq \frac 1 {k^{2d}}.
\end{align*}

Now, from the union bound,
$\pr[\exists i\in B \text{ such that }\min(\opt(L_i),\opt(R_i)) < (1-\delta) \frac{\opt(S_i)}2] \leq \frac{2|B|}{k^{2d}} \leq \frac {2(k+1)^d}{k^{2d}}\leq (\frac 3 k)^d$, for all $k\geq 2$. Thus, $\pr[E] \geq 1 - (\frac 3 k)^d$.

(I) and (II) together complete the proof.
\end{proof}

\section{Proof of Lemma~\ref{lem:gaps}}\label{sec:proofLemmaGaps}
\gaps*
\begin{proof}[Proof of~\cref{lem:gaps}]
    We first define disjoint sets $R_1,\cdots, R_m$,
which we refer to as blocks. Each block is of size $b:=2\lceil \log_2 n \rceil$ and together they almost cover $[n]$.
    $R_1 = \{1, 2, \cdots,b\}$,
    $R_2 = \{b + 1, 
    b + 2, \cdots, 2b\}$, and so on, and finally
    $R_m = \{(m-1)b + 1, (m-1)b + 2, \cdots, mb\}$,  where $m$ is the largest integer such that
    $mb\leq n$. Thus, $[mb]=R_1 \uplus \cdots \uplus R_m$, and the number of elements of $[n]$ that are not covered by any of the blocks is $|[n]\setminus [mb]| < b$.

Let $T$ be a random subset of $[n]$ of size $\lceil \frac n 2\rceil$ chosen uniformly at random. Let $E$ be the event that $T$ {\it hits} all the blocks, i.e., for all $i \in [m]$, $T \cap R_i \neq \emptyset$.

Let the $X_i$'s be defined as in the lemma statement. Observe that if $E$ holds, then the {\it gaps} $X_i - X_{i-1}$
for each $i$ is at most $2b = 4 \lceil \log_2 n\rceil$.
In other words, if the event $E$ holds, then $\max_{i\in\left[\lceil n/2\rceil + 1\right]} (X_i - X_{i-1}) \leq 4 \lceil\log_2 n\rceil$.
Thus, to obtain the claimed bound, we will show that $E$ holds with high probability. 

To this end, we first show that the probability that a given block is not hit is low. In fact, we show a slightly stronger statement that the probability an arbitrary set of size $b$ is not hit is low.
Fix a set $R\subseteq[n]$ such that $|R|=b$. We want to upper bound the probability of the event $R\cap T = \emptyset$. Let $A$ be a random subset of $[n]$ of size $b$ chosen uniformly at random and independent of $T$. Notice that
$\pr[R \cap T=\emptyset]= \pr[A \cap T = \emptyset]$ as $\pr[A \cap T = \emptyset]$
\begin{align*}
    &= \sum_{S\subseteq[n] : |S|=b} \pr[A = S] \pr[S\cap T = \emptyset\mid A=S]
    =\sum_{S\subseteq[n]:|S|=b} \pr[A = S] \pr[R\cap T = \emptyset]\\
    &=\pr[R\cap T = \emptyset]\sum_{ S\subseteq[n]: |S|=b} \pr[A = S]
   =\pr[R\cap T=\emptyset],
\end{align*}
where $\pr[S\cap T = \emptyset \mid A = S] = \pr[R\cap T = \emptyset]$ follows from symmetry.

Now, we will evaluate $\pr[A\cap T=\emptyset]$. We will first condition on $T$ being a fixed set $T'$ of size $\lceil n/2\rceil$ and then evaluate the probability that $A$ does not hit $T'$. For this, we will think of $A$ as being generated by picking one element at a time uniformly at random from the unpicked elements. The probability that the first element picked in $A$ does not interest $T$ is $\frac{n-\lceil n/2 \rceil}n \leq \frac 1 2$. Next, the probability the second element picked in $A$ does not belong to $T$ conditioned on the first picked element not belonging to $T$ is again $\frac{n-\lceil n/2 \rceil - 1}{n-1} \leq \frac 1 2$, and so on. Thus,

\begin{align*}
    \pr[R\cap T=\emptyset] &= \pr[A\cap T = \emptyset]\\
    &= \sum_{T'\subseteq [n] : |T'|=\lceil \frac n 2\rceil} \pr[T=T']Pr[A\cap T' = \emptyset \mid T=T']\\
     &\leq \sum_{T'\subseteq [n] : |T'|=\lceil \frac n 2\rceil} \pr[T=T'] \left(\frac 1 2\right)^{2\lceil\log_2 n\rceil}\\
     &=\left(\frac 1 2\right)^{2\lceil\log_2 n\rceil}\leq \frac 1{n^2}.
\end{align*}

We can now obtain our result by simply applying the union bound as follows.
\begin{align*}
    \pr&\left[\max_{i\in\left[\lceil n/2\rceil + 1\right]} (X_i - X_{i-1}) \leq \lceil4 \log_2 n\rceil\right] \geq \pr[E]\\
    &\quad\quad\quad= 1 - \pr[\exists i\in[m] \text{ such that }R_i\cap T = \emptyset]\\
    &\quad\quad\quad\geq 1 - \sum_{i\in[m]}\pr[R_i\cap T = \emptyset]\quad\text{(from the union bound)}\\
    &\quad\quad\quad\geq 1 - \frac{m}{n^2}\geq 1- \frac 1 n,
\end{align*}
where the last inequality holds since the number of blocks $m \leq n.$
\end{proof}

\section{Data Structures}
\label{sec:datastructures}
In this section, we show how to implement our algorithms in $\Tilde{O}(n)$ time. We begin with the easier case of intervals before generalizing to hyperrectangles.

\subsection{Intervals}
\label{sec:datastructuresforInterval}
The length of an interval used for classification during the observation phase is determined based on the relative position of the interval with respect to the intervals arriving during the scale preparation phase. Hence if we store the starting points of the intervals sorted in non-decreasing order in a balanced binary search tree (which can support insertion in $O(\log n)$ time), the scaled length of an interval in the outsourcing phase can be determined in $O(\log n)$ time using two \textsf{Search} operations on the binary search tree.

During the observation phase, in each size class, we store the arriving intervals sorted in the non-decreasing order of their ending points, in a binary search tree. At the end of this phase, the maximum independent sets of all the size classes can be determined in $O(n)$ time by applying $\greedy$ on each class independently ($\greedy$ runs in linear time when the intervals are sorted according to their ending points \cite{kleinberg2006algorithm}). Hence the class $m$ to proceed during the action phase can be found in $O(n)$ time at the end of the observation phase. Finally, during the action phase, we again implement $\greedy$ on the class $m$ using a binary search tree as described before. Therefore, the total running time of our algorithm is $O(n\log n)$, finishing the proof of \Cref{thm:interUnbound}.

\subsection{Hyperrectangles}
\label{sec:datastructuresforRect}
We now turn to the implementation for $d$-dimensional hyperrectangles. First note that checking whether two $d$-dimensional hyperrectangles intersect takes $O(1)$ time for constant $d$.

The scale preparation phase is a direct generalization of the case for intervals to higher dimensions. We maintain the starting points of the hyperrectangles in non-decreasing order in each dimension with the help of $d$ separate binary search trees, one for each dimension. When a hyperrectangle $H$ arrives, we insert the $d$ starting points of $H$ (one in each dimension) into the corresponding binary search trees. This takes $d\cdot O(\log n) = O_d(\log n)$ time per step.

Now during the observation phase, we essentially implement $\greedy$ for each size class independently. Hence we need a separate data structure for each class that maintains the independent set of hyperrectangles of that class that are selected by $\greedy$. When a new hyperrectangle $H$ in that class arrives, the data structure must determine whether $H$ intersects any previously selected hyperrectangle. If not, then $H$ needs to be inserted into the data structure. We shall call this as an \textit{independence update} for $H$.

We use separate data structures for the sets $\{S_x\}_{x\in [d]}$ and $\{S_y\}_{y\in [k]^d}$.

\begin{lemma}
\label{lem:DSx}
    For any $x\in [d]$, there exists a data structure $\DD_x$ that can perform independence updates for $S_x$ in $O(\log n)$ time with probability at least $1-1/n$.
\end{lemma}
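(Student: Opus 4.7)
The plan is to build $\DD_x$ as a one-dimensional uniform grid bucketing along axis $x$, exploiting both the defining property $l_x(H) \leq 1$ of $S_x$ and the $(K,D)$-boundedness with $K = O(n)$ and $D = O(\log n)$, the latter holding with probability $1 - 1/n$ by \Cref{lem:gaps} applied to dimension $x$.

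Concretely, I would maintain an array of buckets indexed by $i \in \{0, 1, \dots, K\}$, where bucket $i$ stores the already-selected hyperrectangles of $S_x$ whose $x$-projection intersects the open interval $(i, i+1)$. Because $l_x(H) \leq 1$ for every $H \in S_x$, each such $H$ lives in at most two consecutive buckets, so each insertion writes to $O(1)$ cells. An independence update for an incoming $H \in S_x$ then proceeds by: locating the $O(1)$ buckets whose interval meets $H$'s $x$-projection, scanning the candidates stored there, running the $O(1)$-time geometric intersection test against each, and if none of them intersect $H$, appending $H$ to the relevant buckets. All bucket lookups and appends are $O(1)$, so the per-update cost is governed by the number of candidates scanned.

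The central quantitative claim is that every bucket contains only $O(\log n)$ hyperrectangles with probability at least $1 - 1/n$. A rectangle whose $x$-projection intersects $(i, i+1)$ and has $x$-length at most $1$ must have its left $x$-endpoint in $(i-1, i+1]$, i.e.\ in one of two consecutive unit cells. The random scale-preparation step underlying the $(K,D)$-bounded instance (\Cref{lem:gaps} applied in dimension $x$) ensures that the number of hyperrectangles of $\mathcal H$ whose $x$-left-endpoint lies in any single unit interval is $O(\log n)$, so each bucket collects $O(\log n)$ rectangles of $S_x$ in total, and in particular only $O(\log n)$ of the selected ones. Multiplying this candidate count by the $O(1)$-time intersection test yields the advertised $O(\log n)$ update cost, with the sole failure mode being the $1/n$-probability event that the random scaling leaves some unit cell with an excessive endpoint load.

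The main obstacle I anticipate is arguing the per-bucket load cleanly: the stated $(K,D)$-boundedness controls rectangles \emph{contained in} $(i,i+1)$, whereas the data structure needs control over rectangles whose $x$-projection \emph{intersects} $(i,i+1)$. Fortunately, for $S_x$ the two notions differ only by a constant factor once one charges each rectangle to the unit cell containing its left $x$-endpoint and uses $l_x(H) \leq 1$ to conclude that only $O(1)$ consecutive cells can ever claim a given rectangle; this charging, combined with the high-probability per-cell endpoint bound from \Cref{lem:gaps}, closes the argument.
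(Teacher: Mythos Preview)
Your proposal is correct and rests on the same probabilistic observation as the paper: after the scale-preparation phase, with probability at least $1-1/n$ (via \Cref{lem:gaps}) every unit interval along axis $x$ contains only $O(\log n)$ left $x$-endpoints, so any hyperrectangle in $S_x$ has at most $O(\log n)$ potential intersectors among the selected ones.

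The implementation differs slightly. The paper uses a one-dimensional range tree over the $x$-starting points of the selected hyperrectangles and answers each independence update by a range query plus insertion, both in $O(\log n)$ time. You instead use a uniform bucket array of size $K=O(n)$ along axis $x$ and do direct $O(1)$ indexing into the at most two buckets meeting the incoming projection. Your version is more elementary and avoids the range-tree machinery, at the cost of $O(n)$ initialization per $\DD_x$; since there are only $d$ such structures this is $O_d(n)$ total and still dominated by the $O_d(n\log n)$ overall running time. The paper's range tree keeps the initialization at $O(1)$ (so the later remark that all initializations cost $O((\log n)^d)$ remains literally true), but this is a cosmetic difference only. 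Your handling of the ``contained in'' versus ``intersects'' distinction via the left-endpoint charging argument is exactly the right fix and matches what the paper uses implicitly.
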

\begin{proof}
    By definition, all hyperrectangles in $S_x$ have (scaled) length at most 1 in dimension $x$. Let $\HH'(x)$ denote the set of hyperrectangles selected by $\greedy$; at the beginning of the observation phase, $\HH'(x) = \emptyset$. We store the starting points of the projections of the hyperrectangles of $\HH'(x)$ in dimension $x$, in a range tree. When a hyperrectangle $H\in S_x$ arrives, we perform a range query with the projection of $H$ in dimension $x$. Recall that there are only $O(\log n)$ hyperrectangles whose projection in dimension $x$ overlaps with that of $H$ (\Cref{lem:gaps}). All of these are output by the range query in $O(\log n)$ time. If $H$ does not intersect any of these hyperrectangles, we insert the starting point of $H$ in the range tree, which again takes $O(\log n)$ time. Hence the overall time required for an independence update is $O(\log n)$.
\end{proof}

\begin{lemma}
\label{lem:DSy}
    For any $y\in [k]^d$, there exists a data structure $\DD_y$ that can perform independence updates for $S_y$ in $O_d(\log n)$ time.
\end{lemma}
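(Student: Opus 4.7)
The plan is to maintain $\DD_y$ as a sparse uniform grid whose cell dimensions are matched to the (nearly uniform) side lengths of the hyperrectangles in $S_y$. Fix $y=(y_1,\dots,y_d)\in[k]^d$ and recall that every $H\in S_y$ satisfies $l_j(H)\in((1+\epsilon)^{y_j-1},(1+\epsilon)^{y_j}]$ for each $j\in[d]$. Define the grid cell with integer index $c=(c_1,\dots,c_d)$ to be the box $\prod_{j=1}^d[\,c_j(1+\epsilon)^{y_j},(c_j+1)(1+\epsilon)^{y_j}\,)$. Because $H$ has length at most $(1+\epsilon)^{y_j}$ in dimension $j$, its projection on dimension $j$ meets at most two consecutive strips, so $H$ intersects at most $2^d=O_d(1)$ cells total. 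The indices of those cells are obtained in $O_d(1)$ arithmetic from the endpoints of $H$ (one floor/ceiling per dimension).

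The data structure $\DD_y$ maintains the set $\HH'(y)$ of hyperrectangles selected so far by the greedy procedure, together with a balanced BST (keyed, say, by the lexicographic order on cell indices) that stores only those cells which currently contain at least one hyperrectangle of $\HH'(y)$; at each such node we hang the short list of hyperrectangles from $\HH'(y)$ that intersect the corresponding cell. The cornerstone of the analysis is the following uniform bound, which I will call the \emph{bucket bound}: there is a constant $C_d=O_d(1)$ such that at every point in time, every cell contains at most $C_d$ hyperrectangles of $\HH'(y)$. To see this, note that any hyperrectangle meeting a fixed cell of side $(1+\epsilon)^{y_j}$ in dimension $j$ is contained in a concentric box of side $3(1+\epsilon)^{y_j}$ in that dimension, and has length at least $(1+\epsilon)^{y_j-1}=(1+\epsilon)^{y_j}/(1+\epsilon)$ in each dimension; since $\HH'(y)$ is an independent set, the standard volume/piercing argument (an easy specialization of Lemma~\ref{lem:greedyBounded} applied locally to this enlarged cell with $\Delta=1+\epsilon=2$) yields at most $(2(1+\epsilon))^d=O_d(1)$ such disjoint hyperrectangles meeting the cell.

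Given the bucket bound, the independence update for an arriving $H\in S_y$ proceeds as follows: (i) compute the $O_d(1)$ cell indices intersecting $H$; (ii) for each such index, perform a BST lookup in $O_d(\log n)$ time; (iii) scan the (at most $C_d$) hyperrectangles stored at each returned cell and test each for intersection with $H$ in $O_d(1)$ time, taking $O_d(1)$ work per cell and $O_d(1)$ work overall; (iv) if no intersection was found, insert $H$ into $\HH'(y)$ and, for each of the $O_d(1)$ intersecting cells, either append $H$ to the existing list in that BST node or create a new node via a BST insertion in $O_d(\log n)$ time. Summing, each independence update costs $O_d(\log n)$ time, as required.

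The only part that needs care is the bucket bound, and that is where I would spend the writing effort: checking that the enlargement factor ``$3$'' is correct (since an intersecting hyperrectangle may extend one full side length in each direction beyond the cell), and that the subsequent application of the piercing/volume argument from Lemma~\ref{lem:greedyBounded} goes through even though the hyperrectangles do not lie entirely inside the enlarged cell. Everything else is a routine grid-lookup argument with a balanced BST; the $\log n$ factor arises solely from the sparse storage of occupied cells (and could in principle be removed with hashing, but $O_d(\log n)$ already meets the claim).
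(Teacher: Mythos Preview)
Your proposal is correct and follows essentially the same approach as the paper: a uniform grid whose cell side in dimension $j$ is matched to the common scale of $S_y$, with $O_d(1)$ cells per hyperrectangle, $O_d(1)$ selected hyperrectangles per cell (via a volume/piercing argument), and sparse storage of occupied cells in a balanced search structure yielding $O_d(\log n)$ per update. The only cosmetic differences are that the paper takes the cell side to be the \emph{minimum} side length $l_j$ (giving $3^d$ cells per hyperrectangle and $5^d$ hyperrectangles per cell) whereas you take the maximum, and your bucket constant $(2(1+\epsilon))^d$ should really be $(3(1+\epsilon))^d$ from the volume argument in the $3s$-enlarged cell; neither affects the $O_d(1)$ conclusion.
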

\begin{proof}
    For each dimension $j\in [d]$, let the side lengths of the hyperrectangles of $S_y$ lie in the range $[l_j,2l_j)$, for some $l_j\in \mathbb{R}$. We construct a $d$-dimensional grid where the grid lines are situated at a separation of $l_j$ in dimension $j$. Each grid cell stores all the hyperrectangles that intersect that cell. We now state two observations that follow using standard volume arguments.

    \begin{observation}
        Each grid cell can intersect at most $5^d = O(1)$ hyperrectangles.
    \end{observation}

    \begin{observation}
        Each hyperrectangle intersects at most $3^d=O(1)$ grid cells.
    \end{observation}

    Hence, the independence update for $H$ is handled as follows: we enumerate all grid cells that intersect $H$ and check the intersection of $H$ with each of the hyperrectangles stored in those cells. Enumerating the grid cells can be done in $O_d(\log n)$ time with the help of $d$ binary search operations. If $H$ does not intersect any of the hyperrectangles, then it is inserted into each of the enumerated grid cells.
\end{proof}

Hence, the overall algorithm during the observation phase is as follows: when a hyperrectangle $H$ arrives, we first compute the scaled side lengths of $H$ in each dimension in order to determine the size class in which it lies. As in the case of intervals, this can be done in $d\cdot O(\log n) = O_d(\log n)$ time. Once the size class is computed, we outsource the hyperrectangle to the corresponding data structure of that size class.

By maintaining a count of the number of hyperrectangles selected by $\greedy$ in each class, and pointers to the classes having the maximum count among $[d]$ and $[k]^d$, the index $m$ to proceed during the action phase can be determined in $O(1)$ time. Finally, depending on whether $m$ belongs to $[d]$ or $[k]^d$, the $\greedy$ algorithm during the action phase can be implemented using the data structures discussed in \Cref{lem:DSx} or \Cref{lem:DSy}, respectively. Since initialising the data structures and counters for $\{S_x\}_{x\in [d]}$ and $\{S_y\}_{y\in [k]^d}$ only takes $O((\log n)^d)$ time, which is dominated by $O_d(n\log n)$ for large $n$, the overall running time of our algorithm is $O_d(n\log n)$. This completes the proof of \Cref{thm:hyperUnbound}.

\end{document}